\pgfplotsset{compat=1.14}
\newcommand{\set}[1]{\{#1\}}
\newcommand{\TS}{\mathcal{TS}}
\newcommand{\ES}{\mathcal{ES}}
\newcommand{\SH}{\mathcal{SH}}
\newcommand{\calA}{\mathcal{A}}
\newcommand{\PROP}{\ensuremath{\mathsf{PROPYLA}}\xspace}
\newcommand{\oram}{\ensuremath{\texttt{ORAM}}\xspace}
\newcommand{\SHARE}{\ensuremath{\texttt{SHARE}}\xspace}
\newcommand{\calK}{\mathcal{K}}
\newcommand{\pluseq}{\mathrel{+}=}
\newcommand{\OpRead}{\text{`Read'}}
\newcommand{\OpWrite}{\text{`Write'}}
\newcommand{\OpReCom}{\text{`ReCom'}}
\newcommand{\OpReTs}{\text{`ReTs'}}
\newcommand{\Write}{\mathsf{Write}}
\newcommand{\Read}{\mathsf{Read}}
\newcommand{\VerCom}{\mathsf{VerCom}}
\newcommand{\VerTs}{\mathsf{VerTs}}
\newcommand{\TA}{\mathsf{TA}}
\newcommand{\Setup}{\mathsf{Setup}}
\newcommand{\RenewCom}{\mathsf{RenewCom}}
\newcommand{\Commit}{\mathsf{Commit}}
\newcommand{\Share}{\mathsf{Share}}
\newcommand{\Reshare}{\mathsf{Reshare}}
\newcommand{\Reconstruct}{\mathsf{Reconstruct}}
\newcommand{\RenewTs}{\mathsf{RenewTs}}
\newcommand{\VerInt}{\mathsf{VerInt}}
\let\oldnl\nl
\newcommand{\nonl}{\renewcommand{\nl}{\let\nl\oldnl}}
\newcommand{\Exp}[2]{\mathbf{Exp}^\mathrm{#1}_{#2}}
\newcommand{\V}{\ensuremath{\mathsf{VIEW}}\xspace}
\newcommand{\op}{\ensuremath{\mathsf{op}}\xspace}
\newcommand{\dat}{\ensuremath{\mathsf{dat}}\xspace}
\newcommand{\id}{\ensuremath{\mathsf{id}}\xspace}
\newcommand{\tnrc}[1]{t_{\mathrm{NRC}(#1)}}
\newcommand{\Clock}{\mathsf{Clock}}
\newcommand{\Client}{\mathsf{Client}}
\newcommand{\AP}{P}
\newcommand{\Loc}{T}
\newcommand{\GetId}{\mathsf{GetId}}
\newcommand{\GenAP}{\mathsf{GenAP}}
\newcommand{\CSI}{\mathsf{CSI}}
\newcommand{\thetime}{\mathsf{time}}
\newcommand{\calS}{\mathcal{S}}
\newcommand{\LINCOS}{\ensuremath{\mathsf{LINCOS}}\xspace}
\newcommand{\Access}{\ensuremath{\mathsf{Access}}\xspace}
\newcommand{\ver}{\mathsf{ver}}
\newcommand{\Stamp}{\mathsf{Stamp}}
\newcommand{\ConfSys}{\mathcal{CS}}
\newcommand{\IntSys}{\mathcal{IS}}
\newcommand{\tst}{\mathsf{ts}}
\begin{document}



\title{\PROP: Privacy-Preserving Long-Term Secure Storage}
\subtitle{(Revised Version, April 2019)}
\thanks{This is the revised version of a paper that appeared in the proceedings of the 6th International Workshop on Security in Cloud Computing (SCC'18) co-located with the 13th ACM ASIA Conference on Computer and Communications Security (ASIACCS'18). See the appendix for a version history.}


\settopmatter{printacmref=false}
\makeatletter
\renewcommand\@formatdoi[1]{\ignorespaces}
\makeatother
\renewcommand\footnotetextcopyrightpermission[1]{}

\fancyhead{}
\pagestyle{plain}

\if01
\begin{CCSXML}
<ccs2012>
<concept>
<concept_id>10002978</concept_id>
<concept_desc>Security and privacy</concept_desc>
<concept_significance>500</concept_significance>
</concept>
<concept>
<concept_id>10002978.10002979.10002981</concept_id>
<concept_desc>Security and privacy~Public key (asymmetric) techniques</concept_desc>
<concept_significance>500</concept_significance>
</concept>
<concept>
<concept_id>10002978.10002979.10002984</concept_id>
<concept_desc>Security and privacy~Information-theoretic techniques</concept_desc>
<concept_significance>500</concept_significance>
</concept>
<concept>
<concept_id>10002978.10003018</concept_id>
<concept_desc>Security and privacy~Database and storage security</concept_desc>
<concept_significance>500</concept_significance>
</concept>
</ccs2012>
\end{CCSXML}

\ccsdesc[500]{Security and privacy}
\ccsdesc[500]{Security and privacy~Public key (asymmetric) techniques}
\ccsdesc[500]{Security and privacy~Information-theoretic techniques}
\ccsdesc[500]{Security and privacy~Database and storage security}

\keywords{Long-term secure storage, Renewable cryptography, Information-theoretic cryptography, Privacy, Integrity, Confidentiality.}

\copyrightyear{2018} 
\acmYear{2018} 
\setcopyright{acmlicensed}
\acmConference[SCC'18]{6th International Workshop on Security in Cloud Computing}{June 4, 2018}{Incheon, Republic of Korea}
\acmBooktitle{SCC'18: 6th International Workshop on Security in Cloud Computing, June 4, 2018, Incheon, Republic of Korea}
\acmPrice{15.00}
\acmDOI{10.1145/3201595.3201599}
\acmISBN{978-1-4503-5759-3/18/06}
\fi

\author{Matthias Geihs, Nikolaos Karvelas, Stefan Katzenbeisser, and Johannes Buchmann}
\affiliation{\institution{Technische Universit\"at Darmstadt, Germany}}


\begin{abstract}
An increasing amount of sensitive information today is stored electronically and a substantial part of this information (e.g., health records, tax data, legal documents) must be retained over long time periods (e.g., several decades or even centuries). When sensitive data is stored, then integrity and confidentiality must be protected to ensure reliability and privacy.
Commonly used cryptographic schemes, however, are not designed for protecting data over such long time periods.
%
Recently, the first storage architecture combining long-term integrity with long-term confidentiality protection was proposed (\mbox{AsiaCCS'17}).
However, the architecture only deals with a simplified storage scenario where parts of the stored data cannot be accessed and verified individually. If this is allowed, however, not only the data content itself, but also the access pattern to the data (i.e., the information which data items are accessed at which times) may be sensitive information.

Here we present the first long-term secure storage architecture that provides long-term access pattern hiding security in addition to long-term integrity and long-term confidentiality protection. 
To achieve this, we combine information-theoretic secret sharing, renewable timestamps, and renewable commitments with an information-theoretic oblivious random access machine.
Our performance analysis of the proposed architecture shows that achieving long-term integrity, confidentiality, and access pattern hiding security is feasible.
\end{abstract}

\maketitle




\section{Introduction}

\subsection{Motivation and Problem Statement}
Large amounts of sensitive data (e.g., health records, governmental documents, enterprise documents, land registries, tax declarations) are stored in cloud-based data centers and require integrity and confidentiality protection over long time periods (e.g., several decades or even centuries).
%
Here, by \emph{integrity} we mean that illegitimate changes to the data can be discovered and by \emph{confidentiality} we mean that only authorized parties can access the data.

Typically, cryptographic signature and encryption schemes (e.g., RSA \cite{Rivest:1978:MOD:359340.359342} and AES \cite{AES}) are used to ensure data integrity and confidentiality.
However, the security of currently used cryptographic schemes relies on computational assumptions (e.g., that the prime factors of a large integer cannot be computed efficiently).
Such schemes are called \emph{computationally secure}.
Due to technological progress, however, computers are becoming more powerful over time and computational assumptions made today are likely to break at some point in the future (e.g., factoring large integers will be possible using quantum computers).
Thus, commonly used cryptographic schemes, which rely on a single computational assumption, are insufficient for protecting data over long periods of time (e.g., decades or even centuries).

To enable long-term integrity protection, Bayer et al.~\cite{Bayer1993} proposed a method for renewing digital signatures by using timestamps.
Following their approach, several other schemes had been developed of which \cite{DBLP:journals/compsec/VigilBCWW15} gives an overview.
It should be noted that only recently the concrete security of these schemes has been investigated and proven \cite{geihs2016security,buldas2017pra,Buldas2017}.
%
While these results indicate that integrity protection can be renewed using timestamps, it appears entirely infeasible to prolong confidentiality.
The reason is that once a ciphertext generated by a computationally secure encryption scheme is observed by an attacker, the attacker can always break the encryption using brute-force and a sufficiently large amount of computational resources over time.
While honey encryption \cite{juels2014honey} can help against brute-force attacks in some scenarios, the only rigorous way to provide long-term confidentiality protection is to use \emph{information-theoretically secure} schemes, as their security does not rely on computational assumptions.
Information-theoretic solutions exist for many cryptographic tasks. For example, secure data storage can be realized using secret sharing \cite{Shamir:1979:SS:359168.359176,Herzberg1995} and secure communication can be realized by combining Quantum Key Distribution \cite{gisin2002quantum} with One-Time-Pad Encryption \cite{shannon1949communication}. We refer the reader to \cite{Braun2014} for an overview of techniques relevant to long-term confidentiality protection.
Recently, the first storage system that combines long-term integrity with long-term confidentiality protection has been proposed by Braun et al.~\cite{Braun:2017:LSS:3052973.3053043}.
They achieve this by combining secret sharing with renewable timestamps and unconditionally hiding commitments.
However, their scheme does not allow for accessing and verifying parts of the stored data separately.

Here we consider a setting where the database consists of a list of blocks and each data block can be retrieved and verified separately.
In this setting, however, not only the data content may be regarded as sensitive information, but also the access pattern to the data (i.e., which data blocks are accessed at which times).
For example, if genome data is stored and accessed at a subset of locations $S$ that are known to be associated with a certain property $X$, then it is likely that the person accessing the data is concerned with $X$.
In such a scenario it is desirable not only to ensure the confidentiality of the data content, but also to ensure the confidentiality of the data access pattern.

\subsection{Contribution}

In this paper we solve the problem of long-term secure data storage with access pattern hiding security by presenting the storage architecture $\PROP$.

Long-term integrity and long-term confidentiality protection in \PROP are achieved similar as in \cite{Braun:2017:LSS:3052973.3053043}.
Confidential data is stored at a set of shareholders using information-theoretic secure secret sharing and integrity protection is achieved by combining renewable commitments with renewable timestamps.
An evidence service helps the user with maintaining integrity protection and renewing the protection when necessary.

We now explain how we additionally achieve long-term access pattern hiding security.
The main idea is to integrate an information theoretically secure oblivious random access machine (ORAM) \cite{Goldreich:1987:TTS:28395.28416} with the shareholders and the evidence service so that none of these storage servers learns the access pattern of the client.
%
The challenge here is to ensure that even under composition the individual security properties (i.e., renewable integrity protection, information-theoretic confidentiality, and information-theoretic access pattern hiding security) are preserved.
To preserve information-theoretic access pattern hiding security, the data flows induced by any two different database accesses must be indistinguishable from the viewpoint of any of the storage servers.
Typically, this is solved by using an ORAM in combination with shuffling and re-encrypting the accessed data blocks.
This approach, however, cannot be directly applied in our case, as here commitments are stored at the evidence service which must be timestamped and therefore cannot be stored encrypted.
We solve the problem of obfuscating the reshuffling by employing a recommitment technique that enables us to refresh commitments while maintaining their binding property.
%
With regard to the secret shareholders it is sufficient to regenerate the secret shares on every data access in order to obfuscate the data block shuffling.
This ensures that there is no correlation between the received and stored data blocks, and thus the shuffling of the data blocks cannot be traced.
%
We also analyze the security of \PROP and show it indeed achieves long-term integrity, long-term confidentiality, and long-term access pattern hiding security.
Finally, we evaluate the performance of \PROP and show that storage, computation, and communication costs appear manageable.

\subsection{Organization}
The paper is organized as follows.
In Section~\ref{sec.buildingblocks}, we state preliminaries.
In Section~\ref{sec.propyla}, we describe our long-term secure storage architecture \PROP.
In Section~\ref{sec.securityanalysis}, we analyze its security and in Section~\ref{sec.evaluation}, we evaluate its performance.


\section{Preliminaries}\label{sec.buildingblocks}

\subsection{Time-Stamping}
A timestamp scheme \cite{DBLP:conf/crypto/HaberS90} consists of an algorithm $\Setup$, a protocol $\Stamp$ executed between a client $\mathcal{C}$ and a timestamp service $\TS$, and an algorithm $\VerTs$ with the following properties.
\begin{description}
\item[$\Setup()$:]
At initialization, the timestamp service $\TS$ runs this algorithm and outputs a public verification key $pk$.

\item[$\Stamp$:]
On input a data object $\dat$, the client runs this protocol with the timestamp server. When the protocol has finished, the client obtained a timestamp $\tst$ for $\dat$, where $\tst.t$ denotes the time associated with the timestamp.

\item[$\VerTs_{pk}(\dat,\tst)$:]
Given the public key of the timestamp service, on input a data object $\dat$ and a timestamp $\tst$, this algorithm returns $1$, if the timestamp is valid for $\dat$, and $0$ otherwise.
\end{description}
A timestamp scheme is secure if it is infeasible for an adversary to generate a timestamp $\tst$ and a data object $\dat$ such that $\tst$ is valid for $\dat$ and $\dat$ did not exist at time $\tst.t$.
This security notion is formalized in \cite{DBLP:conf/asiacrypt/BuldasS04,DBLP:conf/pkc/BuldasL07,DBLP:conf/acisp/BuldasL13}.

\subsection{Commitments}
A commitment scheme is the digital analog of a sealed envelope and allows to commit to a message $m$ without revealing it.
It consists of the algorithms $\Setup$, $\Commit$, and $\VerCom$, that have the following properties.
\begin{description}
\item[$\Setup()$:]
This algorithm is run by a trusted party and generates the public commitment parameters $cp$.

\item[$\Commit_{cp}(\dat)$:]
Given the commitment parameters $cp$, on input a data object $\dat$, this algorithm outputs a commitment $c$, and a decommitment $d$.

\item[$\VerCom_{cp}(\dat,c,d)$]
Given the commitment parameters $cp$, on input a data object $\dat$, a commitment $c$, and a decommitment $d$, this algorithm outputs $1$ if $d$ is a valid decommitment from $c$ to $\dat$ and $0$ if it is invalid.

\end{description}
A commitment scheme is secure if it is hiding and binding.
Hiding means that a commitment does not reveal anything about the message and binding means that a committer cannot decommit to a different message.
Here, we are interested in information theoretically hiding and computationally binding commitments.
For a more formal description of commitment schemes we refer to \cite{Halevi1996,goldreich2001foundations}.

\subsection{Proactive Secret Sharing}\label{sec.secshare}
A proactive secret sharing scheme \cite{Herzberg1995} allows a client $\mathcal{C}$ to share a secret among a set of shareholders $\SH$ such that less than a threshold number of shareholder cannot learn the secret.
It is defined by protocols $\Setup$, $\Share$, $\Reshare$, and $\Reconstruct$.
%
%
\begin{description}
\item[$\Setup$:]
In this protocol the client and the shareholders establish the system parameters.

\item[$\Share$:]
This protocol is run between the client and the shareholders.
The client takes as input a data object $\dat$ and after the protocol has ended each shareholder holds a share of the data.
\item[$\Reshare$:]
This protocol is run periodically between the shareholders. The shareholders take as input their shares and after the protocol is finished, the shareholders have obtained new shares which are not correlated with the old shares.
This protects against a mobile adversary who compromises one shareholder after another over time.
\item[$\Reconstruct$:]
This protocol is run between the shareholders and the client.
The shareholders take as input their shares and after the protocol is finished the client outputs the reconstructed data object $\dat$.
If the reconstruction fails, the client outputs $\bot$.
\end{description}
A secret sharing scheme is secure if no information about the secret is leaked given that at least a threshold number of shareholders is not corrupted. The threshold is usually determined in the setup phase and in some schemes can be changed during the resharing phase.

%

\subsection{Oblivious RAM}
An oblivious random access machine (ORAM) \cite{Goldreich:1987:TTS:28395.28416} allows a client to access a remotely stored database such that the storage server does not learn which data items are of current interest to the client.
Here we assume that the client's data consists of $N$ blocks of equal size and each block is associated with a unique identifier $\id \in \{1,\ldots,N\}$.
The server holds a database of size $M>N$ blocks and each block in the server database is identified by a location $i \in \{1,\ldots,M\}$.
A (stash-free) ORAM is defined by algorithms $\Setup$, $\GenAP$, and $\GetId$ with the following properties.
\begin{description}
\item[$\Setup(N)$:]
This algorithm takes as input the client database size $N$ and generates a client local state $s$ and a server database size $M$.

\item[$\GenAP(s,\id)$:]
This algorithm gets as input a client state $s$ and a client block identifier $\id \in \{1,\ldots,N\}$, and outputs an access pattern $\AP \in \{1,\ldots,N\}^{2 \times n}$, which is a sequence of server block location pairs, and a new client local state $s'$.

\item[$\GetId(s,i)$:]
This algorithm gets as input a client state $s$ and a server block location $i \in \{1,\ldots,M\}$, and outputs the corresponding client block identifier $\id \in \{1,\ldots,N\}$.
\end{description}

An ORAM is used as follows to store and access a database of size $N$.
First, the client runs algorithm $\Setup(N) \to (s,M)$ and initializes the server database with $M$ data blocks.
To access (i.e., read or write) block $\id \in \set{1,\ldots,N}$ at the server, the client first computes $\GenAP(s,\id) \to (\AP, s')$ and updates its local state $s \gets s'$.
Then it accesses the server database according to the access pattern $\AP = [(i_1,j_1),\ldots,(i_n,j_n)]$, as follows.
For every block location pair $(i,j) \in \AP$, it first retrieves block $i$.
Then, it checks if $\GetId(s,i)=\id$ and if this is the cae processes the data.
Afterwards, it stores the block at the new location $j$.

An ORAM is secure if the access patterns generated by $\GenAP$ are indindistinguishable from each other.
In the security experiment (Algorithm~\ref{def:aph}), an adversary can instruct the client to access blocks of its choice and then sees the induced access patterns. In order to break the security, the adversary has to distinguish the access patterns of two access instructions of its choice.

\begin{definition}[ORAM Security]\label{def:aph}
An oblivious RAM \oram is information theoretically secure if for any integer $N$ and probabilistic algorithm 
$\calA$,
$$\Pr \left[\Exp{APH}{\oram,N}(\calA)=1 \right] = \frac{1}{2} \text{ .}$$
\begin{algorithm}[h]
\caption{The ORAM access pattern hiding experiment 
$\Exp{APH}{\oram,N}(\calA)$.\label{alg.aph.oram}}
$(s,M) \gets \oram.\Setup(N)$\;
$(\id_1,\id_2) \gets \calA^{\Client}(M)$\;
$b \gets \mathsf{PickRandom}(\{1,2\})$\;
$\AP \gets \Client(\id_b)$\;
$b' \gets \calA^{\Client}(\AP)$\;
\eIf{b=b'}{\KwRet $1$\;}{\KwRet $0$\;}
\vskip 1mm
\fbox{\begin{minipage}{12em}
\underline{\textbf{oracle} $\Client(\id)$:}\\
$(s,\AP) \gets \oram.\GenAP(s,\id)$\;
\KwRet $\AP$;
\end{minipage}}
\vskip 1mm
\end{algorithm}
\end{definition}
%

\if{false}
\subsection{\LINCOS: Long-Term Secure Storage}
\label{sec.lincos}

The long-term secure storage system \LINCOS recently proposed by Braun et al.\ \cite{Braun:2017:LSS:3052973.3053043} is the first storage system that combines long-term confidentiality with long-term integrity protection.
Here we briefly summarize its functionality.

\LINCOS comprises the following components:
the client, $\mathcal{C}$, wants to securely store his data.
The confidentiality system, $\ConfSys$, is responsible for storing the confidential data.
It consists of a set of storage servers and uses information theoretically secure secret sharing to achieve long-term confidentiality protection.
The integrity system, $\IntSys$, is responsible for maintaining long-term integrity of the stored data.
It consists of an evidence service, $\ES$, and a timestamp service, $\TS$. The evidence service receives commitments to the data from the client and requests timestamps on the commitments from the timestamp service.
Computationally secure authenticated channels secure the communication in the integrity system and information theoretically secure private channels secure the communication in the confidentiality system.
For more details on \LINCOS we refer to the original paper \cite{Braun:2017:LSS:3052973.3053043}.

\fi


\section{Description of \PROP}\label{sec.propyla}

In this section, we describe our new long-term secure storage architecture \PROP, which is the first storage architecture that provides long-term integrity, long-term confidentiality, and long-term access pattern hiding.


\subsection{Overview}
\PROP comprises the following components: a client, an integrity system, which consists of an evidence service and timestamp service, and a confidentiality system, which consists of a set of shareholders (Figure~\ref{fig.overview}).

\begin{figure}
\includegraphics[width=0.9\columnwidth]{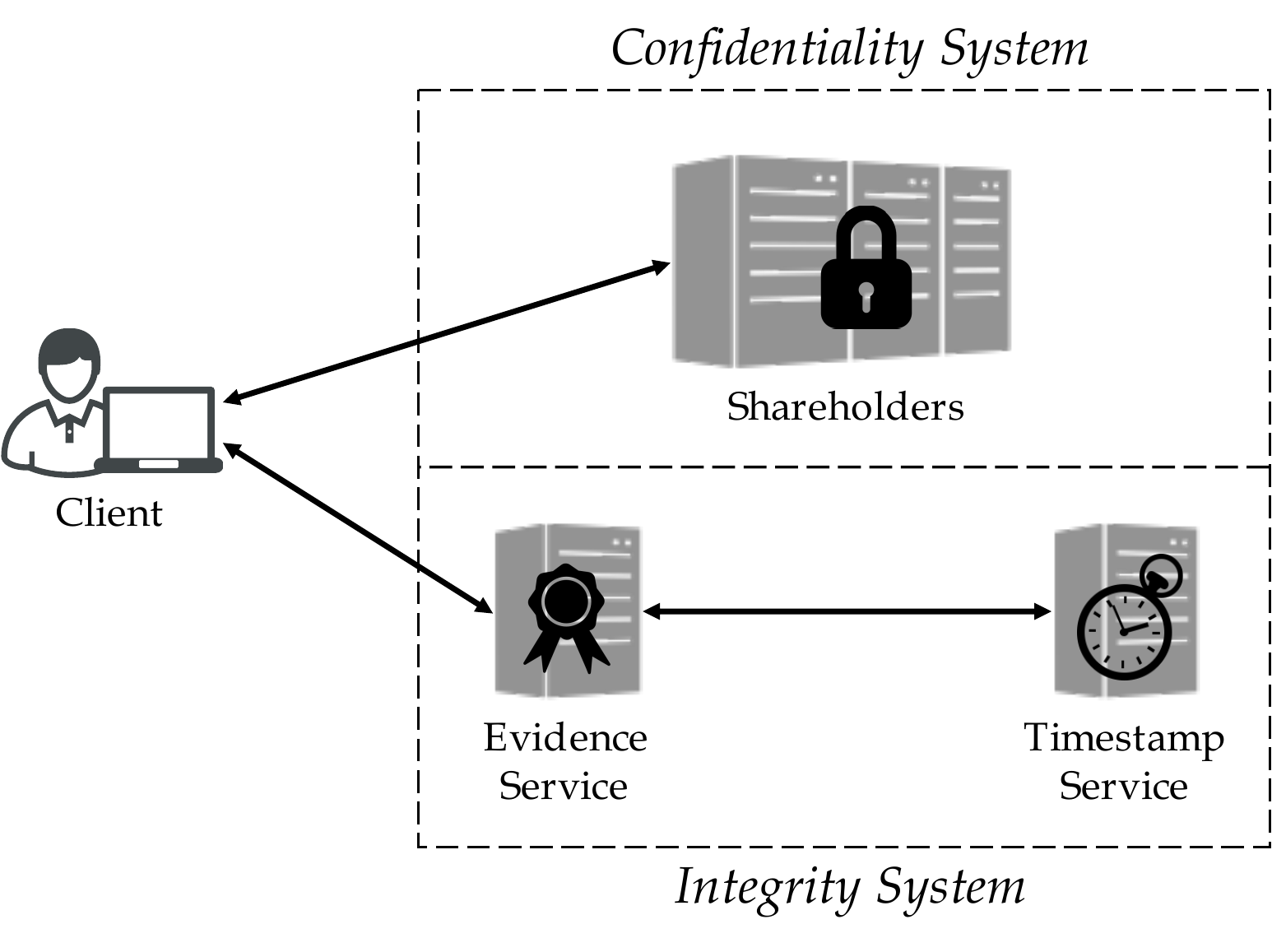}
\caption{Overview of the storage architecture \PROP.\label{fig.overview}\todo[inline]{include ORAM in picture? separate shareholders?}}
\end{figure}

We assume that the client stores a database $D$ that consists of $N$ data blocks that have equal length: $$D=[\dat_1, \dat_2, \ldots, \dat_N] \text{ .}$$
The data is stored at the shareholders using secret sharing, where each block is shared separately.
Integrity protection of the data blocks is achieved by maintaining for each data block $i$ an evidence block $E_i$.
An evidence block $E_i$ has the form
\begin{equation*}
E_i=[(\op_{i,1},c_{i,1},d_{i,1},\tst_{i,1}), (\op_{i,2},c_{i,2},d_{i,2},\tst_{i,2}), \ldots]
\end{equation*}
and describes a history of operations that have been performed on the block.
For an element $(\op_{i,j},c_{i,j},d_{i,j},\tst_{i,j})$ of such an evidence block $E_i$, the first element $\op_{i,j} \in \set{\OpWrite,\OpRead,\OpReCom,\OpReTs}$ describes the operation type that has been performed, and the elements $c_{i,j}$, $d_{i,j}$, and $\tst_{i,j}$ refer to the commitment, decommitment, and timestamp, which have been generated during that operation.
Here, the commitments and timestamps form a chain, where later commitments and timestamps guarantee the validity of earlier commitments and timestamps.
The evidence is partially stored at the evidence service and partially stored at the shareholders.
The newest part of the evidence is stored at the evidence service so that the timestamps can be renewed by the evidence service without the help of the data owner.

To achieve access pattern hiding, the client makes accesses to the evidence service and the storage servers using an information theoretically secure ORAM.
Therefore, the evidence service and the shareholders store a database consisting of $M>N$ blocks, where $M$ is determined by the choice of the ORAM.
The additional $M-N$ blocks provide the client with the necessary storage space so that it can reshuffle and access the data such that a uniform distribution of accesses over the server blocks is achieved.
In order to preserve the access pattern hiding property, there must also be no correlation between the transmitted data of any two blocks.
Typically, this is achived by re-encrypting the data on every access.
In our solution, however, this technique cannot be applied because the evidence service must receive the commitments in plaintext so that it can timestamp them in order to renew the integrity protection.
Instead, we use a recommitment technique to refresh the commitments:
we let the client commit to the previous commitment and timestamp.
Thereby, a new commitment is obtained that is indistinguishable from other fresh commitments while the connection to the originally timestamped commitment is maintained.

%

\subsection{Protocols}
In the following we describe the protocols used in \PROP for initializing the system, accessing and protecting the data, renewing the protection, and integrity verification.

Throughout the description of the protocols we use the following notation.
For $i \in \set{1,\ldots,M}$ and commitment $c$, we write $\ES.\Write(i,c)$ to denote that the client instructs the evidence service to store commitment $c$ at block $i$.
Furthermore, we write $E_\ES \gets \ES.\Read(i)$ to denote that the client retrieves evidence $E_\ES$ of block $i$ from the evidence service.
Likewise, we denote by $\SH.\Write(i,(\dat,E))$ that the client stores data $\dat$ and evidence $E$ to block $i$ at the shareholders using protocol $\SHARE.\Share$ and we write $(\dat,E) \gets \SH.\Read(i)$ to denote that the client retrieves $\dat$ and $E$ of block $i$ from the shareholders using protocol $\SHARE.\Reconstruct$, where $\SHARE$ is the secret sharing scheme chosen by the client in the initialization phase.
For a data object $\dat$, we write $\tst \gets \TS.\Stamp(\dat)$ to denote that a timestamp $\tst$ for $\dat$ is obtained from timestamp service $\TS$.
Likewise, we write $(c,d) \gets \CSI.\Commit(\dat)$ to denote that a commitment and decommitment are generated for $\dat$ using commitment scheme instance $\CSI$.
Throughout our description, we assume that the timestamp services are initialized appropriately using algorithm $\Setup$ of the corresponding timestamp scheme.
Similarly, we assume that commitment scheme instances are initialized by a trusted third party using algorithm $\Setup$ of the corresponding commitment scheme.
Also, we use the following notation for lists.
We write $[]$ to denote an empty list. For a list $E=[x_1,\ldots,x_n]$ and $i \in \{1,\ldots,n\}$, by $E[i:]$ we denote the sublist $[x_i,\ldots,x_n]$, by $E[:-i]$ we denote the sublist $[x_1,\ldots,x_{n-i+1}]$, and by $E[-i]$ we denote the element $x_{n-i+1}$.

\subsubsection{Initialization}
At initialization, the client chooses a secret sharing scheme $\SHARE$, an ORAM scheme $\oram$, and a database size $N$.
It then initializes the ORAM via $(s,M) \gets \oram.\Setup(N)$ and allocates a database with $M$ blocks at the evidence service.
For each $i \in \set{1,\ldots,M}$, the evidence service initializes block $i$ with an empty evidence lists, $E_{\ES,i}=[]$.
Afterwards, the client picks a set of secret share holders and a reconstruction threshold, and then initializes the secret sharing database with $M$ blocks using protocol $\SHARE.\Setup$.
The shareholders initialize their databases such that for each $i \in \{1,\ldots,M\}$ an empty data object $\dat_i=\bot$ and an empty evidence list $E_i=[]$ is stored at block $i$.
We remark that while we use a stash-free ORAM model for the benefit of a more comprehensible description, the construction can be adopted to work with stashed ORAMs in which case the client locally manages the stashed items as usual.

\subsubsection{Read and Write}
The client reads and writes data blocks using algorithm $\Access$ (see Algorithm~\ref{alg.access}).
This algorithm gets as input an operation type $\op \in \{\OpWrite,\OpRead\}$, a block identifier $\id \in \{1,\ldots,N\}$, optionally data to be written $\dat'$, the ORAM state $s$, a commitment scheme instance $\CSI$, and a reference to a timestamp service $\TS$.
It then generates an access pattern, $(\AP,s) \gets \allowbreak \oram.\GenAP(s,\id)$, and for each $(i_k,j_k) \in \AP$ does the following:
first, it retrieves the new evidence $E_{\ES}$ for block $i_k$ from the evidence service and it retrieves the stored data $\dat$ and the old evidence $E$ from the shareholders.
Then the new evidence $E_{\ES}$ is added to the shareholder evidence $E$.
If this is a write operation ($\op = \OpWrite$) and $\oram.\GetId(s,i_k)=\id$, then the block data $\dat$ is replaced with $\dat'$ and a commitment $(c,d)$ to the new data is generated. Since this block is newly written, the existing evidence is discarded and the corresponding evidence is set to $E=[(\OpWrite,c,d,\bot)]$, where $\bot$ is a placeholder for the timestamp that will later be retrieved by the evidence service.
If this is a read operation ($\op = \OpRead$) and $\oram.\GetId(s,i_k)=\id$, then the data $\dat$ is cached and returned when the access algorithm finishes.
Finally, the algorithm refreshes the commitment by creating a new commitment to the block data and the previous commitment.
This is necessary so that the evidence service cannot trace how the client rearranges the blocks.
The refreshed commitment is stored at the evidence service at the new location $j_k$.
The evidence service then timestamps the new commitment and stores the timestamp together with the commitment (Algorithm~\ref{alg.eswrite}).
Also, the client generates new secret shares of the data $\dat$ and the shareholder evidence $E$ and stores them at the shareholders at the new location $j_k$.
As the secret shares are newly generated, they do not correlate with the old shares and their relocation cannot be traced either.

\begin{algorithm}
	\caption{$\Access(\op, \id, \dat', s, \CSI, \TS)$, run by the client.\label{alg.access}}
	
	$([(i_1,j_1),\ldots,(i_n,j_n)],s') \gets \oram.\GenAP(\id)$\;
	
	\vskip 1mm
	
	\For{$k=1,\ldots,n$}{
		\tcp{process $k$-th entry in access pattern}
		
		\vskip 1mm
		
		$E_\ES \gets \ES.\Read(i_k)$;
		$(\dat,E) \gets \SH.\Read(i_k)$\tcp*{read data from location $i_k$}
		
		\vskip 1mm
		
		$E[-1].\tst \gets E_\ES[1].\tst$;
		$E \pluseq E_\ES[2:]$\tcp*{move evidence from ES to SH}
			
		\vskip 1mm
		
		\uIf {$\op = \OpWrite$ {\bf and} $\Loc(i_k) = \id$}{
			$\dat \gets \dat'$;
			$(c,d) \gets \CSI.\Commit(\dat')$;
			$E = [(\OpWrite,c,d,\bot)]$\tcp*{write and commit new data, discard old evidence}
		} \ElseIf {$\op = \OpRead$ {\bf and} $\Loc(i_k) = \id$}{
			$\dat'' \gets \dat$;
			$E'' \gets E$\tcp*{save for output later}
		}

		\If {$\op \neq \OpWrite$ {\bf or} $\Loc(i_k) \neq \id$} {
			
			\tcp{refresh commitments}
			\If {${E[-1].\op = \OpRead}$}{
				$E \gets E[:-2]$\;
			}
			$(c,d) \gets \CSI.\Commit([E[-1].c, E[-1].\tst])$\;
			$E \pluseq [(\OpRead,c,d,\bot)]$\;
		}
		
		$\ES.\Write(j_k,E[-1].c,\TS)$;
		$\SH.\Write(j_k,(\dat,E))$\tcp*{write data to location $j_k$}
		
	}
	\KwRet $(s',\dat'',E'')$\;
\end{algorithm}


\begin{algorithm}
	\caption{$\ES.\Write(i,c,\TS)$, run by the evidence service.\label{alg.eswrite}}
	\label{alg:rcommit}
	$\tst \gets \TS.\Stamp(c)$\;
	$E_{\ES,i} \gets [(\bot,c,\bot,\tst)]$\;
\end{algorithm}

\subsubsection{Renew Timestamps}
If the security of the currently used timestamp scheme is threatened, the evidence service renews the evidence as follows (Algorithm~\ref{alg.rets}).
It picks a new timestamp service $\TS$ that uses a more secure timestamp scheme and then for every $i \in [1,\ldots,M]$,
it first creates a commitment $(c',d')$ to the last commitment and timestamp stored in $E_{\ES,i}$, then requests a timestamp $\tst$ for $c'$, and finally adds $(\OpReTs,c',d',\tst)$ to $E_{\ES,i}$.

\begin{algorithm}
	\caption{$\RenewTs(\CSI,\TS)$, run by the evidence service.\label{alg.rets}}
	\label{alg:renewtimestamps}
	\For {$i=1$ \TO $M$}{
		$(c_i,d_i) \gets \CSI.\Commit([E_{\ES,i}[-1].c , E_{\ES,i}[-1].\tst])$\;
		$ts_i \gets \TS.\Stamp(c_i)$\;
		$E_{\ES,i} \pluseq [(\OpReTs,c_i,d_i,\tst_i)]$\;
	}
\end{algorithm}

\subsubsection{Renew Commitments}
If the security of the currently used commitment scheme is threatened, the client renews the evidence using Algorithm~\ref{alg.recom}.
It starts by selecting a new commitment scheme instance $\CSI$.
Then, for each block $i \in [1,\ldots,M]$, it does the following.
It first retrieves the new evidence $E_{\ES,i}$ from the evidence service and data block $\dat_i$ and old evidence block $E_i$ from the shareholders.
It then adds the new evidence $E_{\ES,i}$ to the shareholder evidence $E_i$.
Afterwards, it creates a new commitment $(c_i,d_i)$ to the secret data $\dat_i$ and the evidence $E_i$.
It then adds $(\OpReCom,c_i,d_i,\bot)$ to $E_i$.
Finally, it sends $c_i$ to the evidence service and distributes $\dat_i$ and $E_i$ to the shareholders.

\begin{algorithm}
	\caption{$\RenewCom(\CSI,\TS)$, run by the client.\label{alg.recom}}
	
	\For{$i=1,\ldots,M$}{
		$E_\ES \gets \ES.\Read(i)$;
		$(\dat,E) \gets \SH.\Read(i)$\tcp*{retrieve data from location $i$}
		
		\vskip 1mm
		
		$E[-1].\tst \gets E_\ES[1].\tst$;
		$E \pluseq E_\ES[2:]$\tcp*{move evidence from ES to SH}
		
		$(c,d) \gets \CSI.\Commit([\dat , E])$\tcp*{recommit to data and evidence}
		$E \pluseq [(\OpReCom,c,d,\bot)]$\tcp*{add new evidence}
		
		\vskip 1mm
		
		$\ES.\Write(i,E[-1].c,\TS)$\;
		$\SH.\Write(i,(\dat,E))$
		\tcp*{store evidence and data at location $i$}
	}
\end{algorithm}

\subsubsection{Share Renewal}
In regular time intervals the shareholders renew the stored shares to protect against a mobile adversary (see Section~\ref{sec.secshare}) by running protocol $\SHARE.\Reshare$.

\subsubsection{Verification}
The verification algorithm (Algorithm~\ref{alg.ver}) uses a trust anchor $\TA$ that certifies the validity of public keys for timestamps and commitments.
Here, by $\VerTs_\TA(\dat,\tst;t_\ver)=1$ we denote that timestamp $\tst$ is valid for $\dat$ at reference time $t_\ver$, and by $\VerCom_\TA(\dat,c,d;t_\ver)=1$ we denote that $d$ is a valid decommitment from $c$ to $\dat$ at time $t_\ver$.
On input a data object $\dat$, a time $t$, an evidence block $E$, and the verification time $t_\ver$, the verification algorithm of \PROP checks whether $E$ is currently valid evidence for the existence of $\dat$ at time $t$, given that the current time is $t_\ver$.
In particular, the algorithm checks that the evidence is constructed correctly for $\dat$, that the timestamps and commitments have been valid at their renewal time, and that the first timestamp refers to time $t$.

\begin{algorithm}
	\caption{$\VerInt_{\TA}(\dat,t,E;t_\ver)$, run by any verifier.\label{alg.ver}}
	\tcc{verifies that $\dat$ existed at time $t$}
	Let $E=[(\op_1,c_1,d_1,\tst_1),\ldots,(\op_n,c_n,d_n,\tst_n)]$\;
	Set $t_{n+1} := t_\ver$\;
	For $i \in [n]$, set $E_i := [(\op_1,c_1,d_1,\tst_1),\ldots,(\op_i,c_i,d_i,\tst_i)]$\;
	For $i \in [n]$, set $t_i := \tst_i.t$\;
	For $i \in [n]$, set $t_{\mathsf{NRC}(i)} := \min (\{t_j \mid act_j=\OpReCom \land j>i\} \cup \{t_{n+1}\})$\;
	\BlankLine
	\For {$i=n$ \TO $1$}{
		Assert $\VerTs_{\TA}(c_i,\tst_i;t_{i+1}) = 1$\;
		\uIf {$\op_i=\OpWrite$ {\bf and} $i=1$}{
			Assert $\VerCom_{TA}(\dat,c_1,d_1; t_{\mathsf{NRC}(1)}) = 1$\;
		}
		\uElseIf {$\op_i=\OpRead$ {\bf or} $\op_i=\OpReTs$}{
			Assert $\VerCom_{\TA}([c_{i-1} , \tst_{i-1}],c_i,d_i; t_{\mathsf{NRC}(i)}) = 1$\;
		}
		\uElseIf {$\op_i=\OpReCom$}{
			Assert $\VerCom_{\TA}([\dat , E_{i-1}],c_i,d_i; t_{\mathsf{NRC}(i)}) = 1$\;
		}
		\Else{Fail\;}
	}
	Assert $\tst_1.t = t$\;
\label{alg:verint}
\end{algorithm}


\section{Security Analysis}\label{sec.securityanalysis}

%

In this section we analyze the security of \PROP.
Our security analysis requires a model of real time which is used for expressing the scheduling of protection renewal events in our security experiments and for expressing computational bounds on the adversary with respect to real time.
We first describe our model of real time.
Then, we show that \PROP provides long-term access pattern hiding, long-term confidentiality, and long-term integrity.

\subsection{Model of Time}
\label{sec.timemodel}
For modeling the security of \PROP, we want to be able to express that certain events (e.g., renewal of timestamps) are performed according to a timed schedule.
Concretely, in the security analysis of \PROP, we consider a \emph{renewal schedule} $\calS$ that describes at which times, and using which schemes the timestamp and commitment renewals are performed.
Additionally, our computational model allows to capture that an adversary becomes computationally more powerful over time (e.g., it gets access to a quantum computer).

We model real time as proposed in \cite{geihs2016security}, i.e., we use a global clock that advances whenever the adversary performs work.
Formally, in a security experiment with an adversary $\calA$, we assume a global clock $\Clock$ with a local state $\thetime$ determining the current time in the experiment.
The adversary $\calA$ is given the ability to advance time by calling $\Clock(t)$, i.e., at $\thetime=t$, it may call $\Clock(t')$ and set the time to $\thetime=t'$, for any $t'$ with $t'>t$.
When this happens, all actions scheduled between times $t$ and $t'$ are performed in order and afterwards the control is given back to the adversary.
We remark that the adversary also uses up his computational power when it advances time.
This is due to the fact that we restrict it to perform only a bounded number of operations per time interval.
Our computational model also captures adversaries who increase their computational power over time. For more details on the adversary model we refer the reader to \cite{Buldas2017}.

\subsection{Long-Term Access Pattern Hiding}

In the following we prove that \PROP achieves information theoretically secure access pattern hiding against the evidence service and the shareholders if the used ORAM, secret sharing scheme, and commitment schemes are information theoretically secure.

Formally, Access-Pattern-Hiding (APH) Security of \PROP is defined via game $\Exp{APH}{\PROP}$ (Algorithm~\ref{alg.aphes}), where an adversary, 
$\calA$, instructs the client of \PROP to read and write database blocks at logical addresses chosen by the adversary.
During these data accesses, $\calA$ observes the data that is transferred between the client and the evidence service and a subset of less than the threshold number of shareholders.
At some point in time, $\calA$ gives two different access instructions to the client.
The client picks one of them at random and executes it.
The goal of the adversary $\calA$ is to infer from the observed data stream which of the access instructions has been executed by the client.

For \PROP to be information theoretically secure access pattern hiding, it must hold that for any timestamp and commitment renewal schedule $\calS$, a computationally unbounded adversary is not able to infer with probability other than $\frac{1}{2}$ which access instruction was made.


\begin{definition}[APH-Security of \PROP]
\PROP is information theoretically APH-secure if for any renewal schedule $\calS$ and any adversary $\calA$:
\begin{equation*}
\Pr\left[\Exp{APH}{\PROP}(\calS,\calA)=1\right] = \frac{1}{2} \text{ .}
\end{equation*}
\begin{algorithm}[ht]
\caption{$\Exp{APH}{\PROP}(\calS,\calA)$}
\label{alg.aphes}
%
%


$((\op_1,\id_1,\dat_1),(\op_2,\id_2,\dat_2))\gets\calA^{\Clock,\Client}()$\;
$b \gets \mathsf{PickRandom}(\{1,2\})$\;
$\V \gets \Client(\op_b,\id_b,\dat_b)$\;
$b' \gets \calA^{\Clock,\Client}(\V)$\;
\eIf{$b=b'$}{\KwRet $1$\;}{\KwRet $0$\;}
\vskip 1mm
\fbox{\begin{minipage}{24em}
\underline{\textbf{oracle} $\Clock(t)$:}\\
\If{$t>\thetime$}{
\emph{Perform all renewals scheduled in $\calS$ between $\thetime$ and $t$}\;
$\thetime \gets t$\;}
\end{minipage}}
\vskip 1mm
\fbox{\begin{minipage}{24em}
\underline{\textbf{oracle} $\Client(\op,\id,\dat)$:}\\
$\PROP.\Access(\op,\id,\dat)$\;
Let $\V$ denote the data received by the evidence service and a subset of less than the threshold number of shareholders during the execution of $\PROP.\Access$;\\
\KwRet $\V$\;
\end{minipage}}
\vskip 1mm
\end{algorithm}
\end{definition}
%
%
\begin{theorem}\label{thm:itap}
If \PROP is instantiated using an information theoretically secure ORAM, information theoretically hiding commitment schemes, and information theoretically secure secret sharing, then it provides information theoretic APH-security.
\end{theorem}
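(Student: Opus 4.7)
The plan is to prove indistinguishability of the two challenge views via a standard hybrid argument that peels off, in order, the information revealed by the commitments to the evidence service, the information revealed by the shares to the corrupted shareholders, and finally the access pattern itself, reducing each step to one of the three assumed information theoretic primitives.

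First I would fix the renewal schedule $\calS$ and the adversary $\calA$, and decompose the view $\V$ observed during any call to $\PROP.\Access(\op,\id,\dat)$ into three syntactic parts: (i)~the sequence of read/write locations $(i_k,j_k)$ generated by $\oram.\GenAP$ and by the renewal procedures, (ii)~the commitments $c$ sent to $\ES$ in $\ES.\Write$ calls (plus the timestamps, which $\ES$ itself computes deterministically from $c$ and the current $\TS$), and (iii)~the shares sent to the corrupted subset of shareholders in $\SHARE.\Share$ calls. Note that decommitments $d$ never travel to $\ES$, and by inspection of Algorithms~\ref{alg.access}, \ref{alg.rets}, and \ref{alg.recom} the corrupted shareholders only ever receive shares produced by a fresh invocation of $\SHARE.\Share$.

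Next I would build the hybrid chain. In Hybrid~$H_1$, replace every commitment $c$ that the client sends to $\ES$ (including those produced by the recommitment step inside $\Access$ and by $\RenewCom$, and those produced inside $\RenewTs$) by a commitment to a fixed dummy message of the appropriate length, using the commitment parameters of whichever scheme $\CSI$ is active at that moment. Because the commitment schemes used at every epoch of $\calS$ are information theoretically hiding, the distribution of the transcript observed by $\ES$ in $H_1$ is identical to that in the real game, even though the commitment chains from the recommitment technique link successive commitments: hiding applies at each link independently, and the decommitments never leak to $\ES$. In Hybrid~$H_2$, replace the inputs to every $\SHARE.\Share$ call (i.e.\ each pair $(\dat,E)$) by a fixed dummy input. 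Since the corrupted shareholders form a set of size strictly less than the reconstruction threshold, information theoretic security of $\SHARE$ makes their portion of the transcript identically distributed in $H_1$ and $H_2$. At this point the entire view $\V$ is a deterministic function of the sequence of locations $(i_k,j_k)$ produced by the ORAM and by the renewal procedures, and of $\calS$-scheduled data independent of the challenge. In Hybrid~$H_3$, invoke the APH security of $\oram$ from Definition~\ref{def:aph}: a standard reduction that simulates $\ES$ and the corrupted shareholders internally, using the dummy commitments and dummy shares from $H_2$, turns any distinguisher between the two challenge instructions $(\op_1,\id_1,\dat_1)$ and $(\op_2,\id_2,\dat_2)$ into a distinguisher between $\GenAP(s,\id_1)$ and $\GenAP(s,\id_2)$; the latter is $1/2$ by assumption. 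Running the chain backward for the second challenge yields the claim.

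The step I expect to require the most care is the reduction to commitment hiding in $H_1$, because the recommitment construction of $\Access$ and of $\RenewCom$ produces commitments whose plaintexts are themselves previous commitments and timestamps, so one must argue hiding over a chain of nested commitments and over all epochs of $\calS$ simultaneously. The argument is a straightforward induction on the number of commitments appearing in the experiment, using information theoretic hiding of the scheme active at the epoch each individual commitment is produced; the nesting is harmless because each outer commitment hides its plaintext perfectly regardless of how that plaintext is distributed. A secondary, mostly bookkeeping subtlety is that $\RenewCom$ reads $(\dat,E)$ from the shareholders before recommitting, so the simulator in the $H_2$ reduction must consistently answer reads with the same dummy values it wrote; this is resolved by having the simulator maintain a local table indexed by server location.
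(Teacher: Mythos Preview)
Your proposal is correct and follows essentially the same approach as the paper: decompose the adversary's view into ORAM locations, commitments sent to the evidence service, and shares sent to corrupted shareholders, and argue that each component is statistically independent of the challenge instruction by invoking, respectively, the information-theoretic security of the ORAM, the commitments, and the secret sharing. The paper states this as a direct ``no statistical correlation'' argument in a single paragraph, whereas you formalize the same decomposition as a three-step hybrid; your added care about the nested recommitment chains and the read-consistency bookkeeping is sound but not needed at the paper's level of detail.
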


\begin{proof}
We observe that the queries made by the client and observed by the adversary are either of the form $(i,c)$, when the client instructs the evidence service to store commitment $c$ at database location $i$, or of the form $(i,s)$, when the client instructs a shareholder to store share $s$ at location $i$.
Furthermore, we observe that when using an information theoretically secure ORAM, there is no statistical correlation between the instructions $(\op,\id,\dat)$ chosen by the adversary and the database locations $i$ sent by the client.
There is also no statistical correlation between the data and the commitments $c$ or the secret shares $s$, as long as the adversary observes less than the threshold number of shareholders.
This is (1) because we use information theoretically hiding commitments and information theoretically secure secret sharing and (2) the shares and the commitments are renewed on every access.

As there is no statistical correlation between the accesses made by the client and the transmitted data, even a computationally unbounded adversary cannot infer which of the challenge instructions $(\op_1,\id_1,\dat_1)$ and $(\op_2,\id_2,\dat_2)$ was executed.
\end{proof}

\subsection{Long-Term Confidentiality}\label{sec.ltconf}
Informally, long-term confidentiality of \PROP means that even an unbounded evidence service and a subset of colluding unbounded shareholders cannot learn anything about the content of the stored data.
More formally, we require that there is no significant statistical correlation between the data stored by the client and the data observed by the evidence service and a subset of shareholders.
We observe that this property immediately follows from the information-theoretic access pattern hiding security of \PROP.

%
%

\subsection{Long-Term Integrity}\label{sec.ltint.security}
Next, we show that \PROP provides long-term integrity protection.
Here we consider an adversary that may be running for a very long time but who can only perform a limited amount of work per unit of time (see the adversary model description in Section~\ref{sec.timemodel}).

The goal of the adversary is to prove that a data object existed at a certain point in time when in reality it did not exist.
To capture this notion more formally, we need to define what it means that a data object existed.
Here, when we say that a data object $\dat$ existed at time $t$ with respect to an adversary $\calA$, we mean that $\calA$ ``knew'' $\dat$ at time $t$ (with high probability).
This can be formalized using computational extractors (e.g., \cite{DBLP:conf/pkc/BuldasL07,geihs2016security,Buldas2017}).
Our security analysis is based on \cite{geihs2016security,Buldas2017}, where it is shown that (under certain computational assumptions) extractable commitments and timestamps can be used to argue about the knowledge of an adversary at a given point in time.
Based on these results, we here give a slightly less technical security proof for the integrity of \PROP.

In our security analysis we use the following notation to describe the knowledge of an adversary $\calA$.
For a data object $\dat$ and a time $t$, we write $\dat \in \calK_\calA[t]$ to denote that $\calA$ knew $\dat$ at time $t$.
We assume without loss of generality that adversaries do not forget knowledge, that is, for any data object $\dat$ and any two points in time $t$ and $t'$, if $\dat \in \calK_\calA[t]$ and $t' > t$, then $\dat \in \calK_\calA[t']$.
We also use the convention that for verification of timestamps and commitments a trust anchor $\TA$ is provided by a PKI that certifies the verification keys of the used timestamp and commitment scheme instances and specifies the corresponding instance validity periods.

We state two lemmas that will be useful for proving long-term integrity protection of \PROP. These lemmas are derived from results of \cite{geihs2016security,Buldas2017} about extractable timestamps and commitments.
The first lemma states that if an adversary $\calA$ knows a timestamp $\tst$ and a message $m$ at a time $t$, and $\tst$ is valid for $m$ at $t$, then $\calA$ has already known $m$ at time $\tst.t$ (with high probability).
The second lemma states that if an adversary $\calA$ knows a commitment value at a time $t$, a message $m$ and a decommitment $d$ are known at time $t'>t$, and $d$ is a valid decommitment at time $t'$, then $\calA$  has already known the message $m$ at the commitment time $t$ (with high probability).\todo{rename lemma and theorem to argument, proof to proof sketch}
\begin{lemma}\label{lem.ts}
For any message $m$, timestamp $\tst$, and time $t$:
\begin{equation*}
(m,\tst) \in \calK_\calA[t] \land \VerTs_\TA(m,\tst;t)=1 \implies m \in \calK_\calA[\tst.t] \text{ .}
\end{equation*}
\end{lemma}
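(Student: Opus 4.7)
The plan is to derive Lemma~\ref{lem.ts} as a direct consequence of the extractability-style security of timestamp schemes established in \cite{geihs2016security,Buldas2017}. The knowledge predicate $\calK_\calA[\cdot]$ is defined operationally via an efficient extractor: $x \in \calK_\calA[t]$ means that some extractor, given (rewindable) access to $\calA$'s internal state at time $t$, outputs $x$ with overwhelming probability. Extractability of timestamps then says that if at time $t$ the adversary holds $(m,\tst)$ with $\VerTs_\TA(m,\tst;t)=1$, then $m$ can already be extracted from $\calA$'s state at the earlier time $\tst.t$; otherwise, one would be able to turn $\calA$ into a forger that produces a valid timestamp for a message it does not yet possess at $\tst.t$, violating timestamp soundness.

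I would proceed in three short steps. First, unfold the hypothesis $(m,\tst)\in\calK_\calA[t]$ to obtain an extractor $\mathsf{Ext}_t$ that outputs $(m,\tst)$ from the time-$t$ view of $\calA$. Second, apply the extractable-timestamp lemma of \cite{geihs2016security,Buldas2017} to $\mathsf{Ext}_t$ together with the verification hypothesis $\VerTs_\TA(m,\tst;t)=1$; this yields a new extractor $\mathsf{Ext}_{\tst.t}$ that, given $\calA$'s state at time $\tst.t$, outputs $m$ with overwhelming probability. Third, conclude $m\in\calK_\calA[\tst.t]$ directly from the definition of the knowledge predicate.

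The main obstacle is lining up the time-based adversary model of Section~\ref{sec.timemodel} with the extractability statement. Because $\calA$ has only bounded computation per unit of time and the renewal schedule $\calS$ governs which timestamp scheme is considered valid on which interval, I must verify that the scheme used to produce $\tst$ is still extractably secure throughout the interval $[\tst.t,t]$ under the work $\calA$ is allowed to perform there. This is precisely the regime handled by \cite{Buldas2017}, so this step reduces to quoting that result with the parameters fixed by $\calS$ rather than reproving it from scratch.
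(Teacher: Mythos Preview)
Your proposal is correct and matches the paper's treatment: the paper does not give an independent proof of Lemma~\ref{lem.ts} but simply states it as a reformulation of the extractable-timestamp results of \cite{geihs2016security,Buldas2017}, which is exactly the route you outline. Your three-step unpacking (extractor at time $t$, invoke extractability, conclude $m\in\calK_\calA[\tst.t]$) and your remark about aligning the bounded-work-per-time model with the validity interval guaranteed by the renewal schedule are more explicit than anything the paper writes, but they are precisely the content the paper defers to those references.
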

\begin{lemma}\label{lem.com}
For any commitment value $c$, time $t$, message $m$, decommitment value $d$, and time $t'>t$:
\begin{multline*}
c \in \calK_\calA[t] \land (m,d) \in \calK_\calA[t'] \land \VerCom_\TA(m,c,d;t')=1 \\ \implies m \in \calK_\calA[t] \text{ .}
\end{multline*}
\end{lemma}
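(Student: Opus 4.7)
The plan is to derive this lemma from the extractable binding property of computationally binding commitment schemes that is developed in \cite{geihs2016security,Buldas2017}. The idea is that the notion $x \in \calK_\calA[t]$ is formalized through the existence of an efficient knowledge extractor that, when applied to $\calA$'s internal state at time $t$, outputs $x$ with overwhelming probability. So to prove the implication, I have to construct such an extractor for $m$ from $\calA$'s state at time $t$, using the hypotheses as a resource.

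First, I would note that $\VerCom_\TA(m,c,d;t')=1$ forces the commitment scheme instance associated with $c$ to lie inside its $\TA$-certified validity period at time $t'$, and therefore also at the earlier time $t < t'$. Consequently, the computational binding assumption of that instance is still in force against whatever computational power $\calA$ has accumulated by time $t$. Second, I would unpack the two knowledge hypotheses: an extractor $K_1$ applied to $\calA$'s view at $t$ yields $c$, and an extractor $K_2$ applied to $\calA$'s view at $t'$ yields a pair $(m,d)$ with $\VerCom_\TA(m,c,d;t')=1$. Since $\calA$ is non-forgetting, running $\calA$ from time $t$ onward (starting from its state at $t$) and then applying $K_2$ reproduces the same $(m,d)$ with overwhelming probability; thus from the state at time $t$ we can efficiently derive $m$ by simulating $\calA$'s continuation up to $t'$ and running $K_2$.

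The last step is to argue that this simulate-and-extract procedure is admissible as a knowledge extractor at time $t$, i.e., that it respects the computational bounds imposed on knowledge at time $t$. For this I would invoke the extractable-commitment framework of \cite{geihs2016security,Buldas2017}: under the computational binding of the commitment instance (which is still valid at time $t$), the message $m$ is already determined by the adversary's state at $t$ in the knowledge-theoretic sense, because otherwise one could combine $K_1$ together with two independent continuations of $\calA$ — both producing valid decommitments of $c$ — into a binding adversary that equivocates $c$, contradicting the binding assumption that holds within the validity period. Hence $m \in \calK_\calA[t]$.

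The main obstacle is the last step, namely formalizing the knowledge-theoretic reduction so that the extractor composed of $K_1$, the forward simulation of $\calA$, and $K_2$ satisfies the resource constraints of an extractor at time $t$. This is precisely the non-trivial machinery carried out in \cite{geihs2016security,Buldas2017}, and my proof would invoke it as a black box rather than redo it. A secondary subtlety is that the commitment parameters must be the certified ones (so that no trapdoor enabling equivocation exists), which is exactly what the trust-anchor verification $\VerCom_\TA$ guarantees.
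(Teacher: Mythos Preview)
Your proposal is correct and in fact goes beyond what the paper does: the paper states Lemma~\ref{lem.com} without proof, merely remarking that it is ``derived from results of \cite{geihs2016security,Buldas2017} about extractable timestamps and commitments.'' Your sketch of the extractor-based argument (combine the time-$t$ extractor for $c$ with a forward simulation and the time-$t'$ extractor for $(m,d)$, then use computational binding within the validity period to rule out equivocation) is exactly the mechanism underlying those references, so you are aligned with the paper's approach and simply more explicit about it.
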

%
%
Next, we use Lemma~\ref{lem.ts} and Lemma~\ref{lem.com} to show that \PROP provides long-term integrity protection.
\begin{theorem}\label{thm.ltint}
\PROP provides long-term integrity protection, that is, it is infeasible for an adversary to produce evidence $E$ valid for data $\dat$ and time $t$ without having known $\dat$ at time $t$ given that the used timestamp and commitment schemes are secure within their usage period.
\end{theorem}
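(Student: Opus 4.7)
The plan is to induct downwards along the evidence chain $E = [(\op_1,c_1,d_1,\tst_1), \ldots, (\op_n,c_n,d_n,\tst_n)]$, alternately invoking Lemma~\ref{lem.ts} and Lemma~\ref{lem.com} to transfer the adversary's knowledge of committed payloads backwards in time, until we reach the initial $\OpWrite$ operation and pin down $\dat \in \calK_\calA[\tst_1.t] = \calK_\calA[t]$. The starting point is that whenever $\calA$ produces a verifying evidence chain at verification time $t_\ver = t_{n+1}$, every component of $E$ (together with $\dat$) lies in $\calK_\calA[t_\ver]$, and every $\VerTs$ and $\VerCom$ assertion in Algorithm~\ref{alg:verint} holds.

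The inductive hypothesis at index $i$ will be that $(c_i, \tst_i, d_i) \in \calK_\calA[t_{i+1}]$. Applying Lemma~\ref{lem.ts} to the valid timestamp $\tst_i$ yields $c_i \in \calK_\calA[t_i]$, where $t_i := \tst_i.t$. A case distinction on $\op_i$, following the corresponding branch of Algorithm~\ref{alg:verint}, identifies the unique message $m_i$ that $c_i$ opens to: $m_i = \dat$ when $\op_i = \OpWrite$ (which occurs only at $i = 1$); $m_i = [c_{i-1}, \tst_{i-1}]$ when $\op_i \in \{\OpRead, \OpReTs\}$; and $m_i = [\dat, E_{i-1}]$ when $\op_i = \OpReCom$. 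Since $(m_i, d_i) \in \calK_\calA[t_{\mathsf{NRC}(i)}]$ by the non-forgetting convention and the decommitment verifies at $t_{\mathsf{NRC}(i)}$, Lemma~\ref{lem.com} yields $m_i \in \calK_\calA[t_i]$. In the Read/ReTs case this regenerates the hypothesis at index $i - 1$; in the ReCom case we already obtain $\dat \in \calK_\calA[t_i]$, and at $i = 1$ the Write case extracts $\dat$ directly, whereupon the final assertion $\tst_1.t = t$ delivers $\dat \in \calK_\calA[t]$.

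The main obstacle, and the reason for the $t_{\mathsf{NRC}(i)}$ bookkeeping in Algorithm~\ref{alg:verint}, is ensuring that each application of Lemma~\ref{lem.com} occurs at a time during which the commitment scheme used for $c_i$ is still binding. Between two consecutive $\OpReCom$ events the same commitment scheme is in effect, so $c_i$ may only be trusted up to the next $\OpReCom$ (or to $t_\ver$ if none exists), which is exactly $t_{\mathsf{NRC}(i)}$; after that point the freshly committed outer layer absorbs the binding role, so the induction seamlessly crosses each renewal boundary. An analogous timing constraint for Lemma~\ref{lem.ts} is handled by verifying $\tst_i$ at $t_{i+1}$ rather than at $t_\ver$. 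Collecting one negligible failure probability from each of the $2n$ extractor invocations and applying a union bound then yields the claimed negligible overall bound, completing the sketch.
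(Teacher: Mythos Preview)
Your overall strategy matches the paper's: a downward induction along the evidence chain, alternately applying Lemma~\ref{lem.ts} and Lemma~\ref{lem.com}, with a case split on $\op_i$ and the $t_{\mathsf{NRC}(i)}$ bookkeeping handling commitment-scheme validity windows. The final step ($\op_1=\OpWrite$, hence $\dat\in\calK_\calA[t_1]=\calK_\calA[t]$) is also the same.

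There is, however, one imprecision in your induction. Your stated hypothesis at index $i$ is only $(c_i,\tst_i,d_i)\in\calK_\calA[t_{i+1}]$, and from this you claim ``$(m_i,d_i)\in\calK_\calA[t_{\mathsf{NRC}(i)}]$ by the non-forgetting convention.'' But non-forgetting propagates knowledge \emph{forward} in time, whereas $t_{\mathsf{NRC}(i)}\le t_{n+1}$, so the convention alone does not furnish knowledge at the earlier time $t_{\mathsf{NRC}(i)}$. What actually supplies $(m_i,d_i)\in\calK_\calA[t_{\mathsf{NRC}(i)}]$ is the extraction performed at the \emph{next} $\OpReCom$ step (or the base case if none exists): when $\op_k=\OpReCom$ with $t_k=t_{\mathsf{NRC}(i)}$, Lemma~\ref{lem.com} there yields $[\dat,E_{k-1}]\in\calK_\calA[t_k]$, and $(m_i,d_i)$ is computable from $[\dat,E_{k-1}]$. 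The paper avoids this slip by carrying a two-part invariant
\[
St(i):\quad (c_i,\tst_i)\in\calK_\calA[t_{i+1}] \ \land\ (\dat,E_i)\in\calK_\calA[t_{\mathsf{NRC}(i)}],
\]
so the second conjunct is always available when Lemma~\ref{lem.com} is invoked. Strengthening your hypothesis in the same way (or explicitly threading the ReCom extractions through the argument) closes the gap; the rest of your proof then goes through exactly as in the paper.
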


\begin{proof}
Assume an adversary $\calA$ outputs $(\dat,t,E)$ at some point in time $t_{n+1}$ and that $\TA$ is the trust anchor provided by the PKI at that time.
We show that if $E$ is valid evidence for data $\dat$ and time $t$ (i.e., $\VerInt_\TA(\dat,t,E)=1$), then the adversary did know $\dat$ at time $t$ (with high probability).

Let $E=[(\op_1,c_1,d_1,\tst_1),\ldots,(\op_n,c_n,d_n,\tst_n)]$ without loss of generality.
Then, for $i \in [1,\ldots,n]$, define $E_i=[(\op_1,c_1,d_1,\tst_1),\allowbreak\ldots,\allowbreak(\op_i,c_i,d_i,\tst_i)]$, $t_i = \tst_i.t$, and $\tnrc{i}$ as the time of the next commitment renewal after commitment $c_i$.
Additionally, we define $\tnrc{n} = t_{n+1}$.
In the following, we show recursively that for $i \in [n,\ldots,1]$, statement
\begin{equation*}
St(i): (c_i,\tst_i) \in \calK_\calA[t_{i+1}] \land (\dat,E_i) \in \calK_\calA[\tnrc{i}]
\end{equation*}
holds, that is, commitment value $c_i$ and timestamp $\tst_i$ are known at the next timestamp time $t_{i+1}$ and the data $\dat$ and partial evidence $E_i$ are known at the next commitment renewal time $\tnrc{i}$.

Statement $St(n)$ obviously holds because $(\dat,E)$ is output by the adversary at time $t_{n+1}$ and includes $c_n$, $\tst_n$, $\dat$, and $E_n$.
Next, we show for $i \in \{n,\ldots,2\}$, that $\VerInt_\TA(\dat,t,E)=1$ and $St(i)$ implies $St(i-1)$.
%
By the definition of $\VerInt$ (Algorithm~\ref{alg.ver}) we observe that $\VerInt_\TA(\dat,t,E)=1$ implies $\VerTs_\TA(c_i,\tst_i;t_{i+1}) = 1$, that is, $\tst_i$ is valid for commitment $c_i$ at time $t_{i+1}$. Furthermore, we observe that $St(i)$ implies $(c_i,\tst_i) \in \calK_\calA[t_{i+1}]$, which means that $c_i$ and $\tst_i$ were known at time $t_{i+1}$.
We can now apply Lemma~\ref{lem.ts} to obtain that commitment $c_i$ was known at time $t_i$ (i.e., $c_i \in \calK_\calA[t_i]$).
%
Next, we distinguish between the case $\op_i \in \{\OpRead,\OpReTs\}$ and the case $\op_i \in \{\OpReCom\}$.
We observe that if $\op_i \in \{\OpRead,\OpReTs\}$, then $\VerCom_\TA(c_{i-1} \| \tst_{i-1},c_i,d_i; \allowbreak \tnrc{i}) = 1$ and by Lemma~\ref{lem.com} it follows that $(c_{i-1} , \tst_{i-1}) \in \calK_\calA[t_i]$ (i.e., $c_{i-1}$ and $\tst_{i-1}$ were known at time $t_i$).
%
If $\op_i \in \{\OpReCom\}$, then $\VerCom_\TA(\dat \| E_{i-1},c_i,d_i; \allowbreak \tnrc{i}) = 1$ and it follows that $(\dat,E_{i-1})\in \calK_\calA[\tnrc{i-1}]$.
Together, we obtain that if $\VerInt_\TA(\dat,t,E)=1$ and $St(i)$, then $St(i-1)$ holds.
By induction over $i$ it follows that $St(1)$ holds.

Finally, we observe that $St(1)$ by Lemma~\ref{lem.ts} and Lemma~\ref{lem.com} implies $\dat \in \calK_\calA[t_1]$.
Also, $\VerInt_\TA(\dat,t,E)=1$ implies $t_1=t$, and thus we obtain $\dat \in \calK_\calA[t]$.
We conclude that if the adversary presents $(\dat,t,E)$ such that $\VerInt_\TA(\dat,\allowbreak t,E)=1$, then it must have known $\dat$ at time $t$.
\end{proof}


\section{Instantiation and Evaluation}\label{sec.evaluation}



\newcommand{\postquantum}{1}

In this section, we describe an instantiation of \PROP and analyze its performance.

\subsection{Scheme Instantiation}
\label{sec.instantiation}

We instantiate \PROP using Path-ORAM \cite{pathoram}, Shamir Secret Sharing \cite{Shamir:1979:SS:359168.359176}, Halevi-Micali Commitments \cite{Halevi1996}, RSA Signatures \cite{Rivest:1978:MOD:359340.359342}, and XMSS Signatures \cite{Buchmann2011}.
The implementation has been done in Java and we use the following parameters.
For Path-ORAM we use a bucket size of $5$.
Our implementation of Halevi-Micali Commitments uses the hash functions SHA-224, SHA-256, and SHA-384 \cite{sha2}.
For RSA, we use the standard JDK implementation and use SHA-224 with RSA-2048.
For XMSS, we use the implementation from the Bouncy Castle Library \cite{bouncycastle}, which supports the hash functions SHA-256 and SHA-512, and we use a tree height of 10.

This instantiation has the required security properties.
Path ORAM instantiated with an information theoretically secure random number generator (e.g., a quantum-based random number generator \cite{doi:10.1080/09500340008233380}) provides information theoretic security \cite{pathoramfull}.
Shamir Secret Sharing and Halevi-Micali Commitments are information theoretic hiding.
Therefore, by Theorem~\ref{thm:itap}, the described instantiation of \PROP is information theoretic access pattern hiding.
Information theoretic confidentiality also follows from Theorem~\ref{thm:itap} (see Section~\ref{sec.ltconf}).
Finally, by Theorem~\ref{thm.ltint}, long-term integrity is achieved as long as the used commitment and signature scheme instances are secure within their usage period.
In Table~\ref{tab.parameters} we list the commitment and signature scheme instances that we use together with their usage periods, which are based on the predictions by Lenstra and Verheul \cite{DBLP:journals/joc/LenstraV01,lenstra2004key,keylength.com}.
Also, we assume that quantum computers will become a considerable threat by 2040 and therefore transition from RSA Signatures (which are known vulnerable to quantum computers) to XMSS Signatures (which are expected secure against quantum computer attacks) after 2030.

\subsection{Evaluation}

\subsubsection{Scenario}
In the following we examine a use case where a client stores $N$ data objects of size $L$ for a time period of \numprint{100} years using \PROP with 3 shareholders.
To maintain long-term integrity protection, timestamps are renewed every 2 years, which corresponds to the typical lifetime of a public key certificate, and commitments are renewed every 10 years, which corresponds to the longer lifetime of commitments because they do not involve any secret parameters that could compromise security.
Cryptographic schemes in \PROP are instantiated as described in Section~\ref{sec.instantiation}.
Computation times of \PROP are estimated by counting the number of the involved primitive operations (e.g., commitments, timestamps) and multiplying these numbers by the running time of the respective operations, which were measured on a computer with a \SI{2.9}{\giga\hertz} Intel Core i5 CPU.
We remark that our performance analysis does not consider network latency.

\begin{table}
\center

\caption{\label{tab.parameters} Overview of the used commitment and signature scheme instances and their usage period.}

\begin{tabular}{ccc}
\toprule
\textbf{Years} & \textbf{Signatures} & \textbf{Commitments} \\
\midrule

2018-2030 & RSA-2048 & HM-224 \\

2031-2066 & XMSS-256 & HM-224 \\

2067-2091 & XMSS-256 & HM-256 \\

2091-2118 & XMSS-512 & HM-384 \\

\bottomrule
\end{tabular}

%
%
%
%
%

\end{table}

\subsubsection{Results}
We now present the results of our performance evaluation focusing on the costs of enabling long-term integrity protection. In particular, we measure the costs for generating, communicating, and storing timestamps and commitments.

In Figure~\ref{fig:dbsizes}, we show the space required for storing commitments, decommitments, and timestamps.
We observe that the storage space required per shareholder increases with each commitment renewal and that the magnitude of the increase depends on the commitment and signature scheme parameters.
With respect to the evidence service we observe that the required storage space increases with each timestamp renewal and is reset with each commitment renewal, when commitments and timestamps are moved to the shareholders.
The storage space required for integrity protection is independent of the data block size because commitment and signature sizes are independent of the data size.
After 100 years, about \SI{50}{\kilo\byte} of storage per block is required at the evidence service and about \SI{150}{\kilo\byte} at a shareholder.

\begin{figure}
\centering

\begin{tikzpicture}
\begin{axis}[
    xlabel={Year},
    ylabel={\SI{}{\kilo\byte}},
    ymajorgrids=true,
    grid style=dashed,
    width=\columnwidth,
    height=4.5cm,
    x tick label style={/pgf/number format/.cd, set thousands separator={}},
    legend pos=north west,
    y filter/.code={\pgfmathparse{#1/1024}\pgfmathresult},
]

\addplot+
[jump mark left, each nth point=2, mark size=1pt]
table[x index=0, y index=1]
{data/storage_space.dat};
\addlegendentry{Evidence Service}

\addplot+
[jump mark left, each nth point=10, mark size=1pt]
table[x index=0, y index=2]
{data/storage_space.dat};
\addlegendentry{Shareholder}

\end{axis}
\end{tikzpicture}

\caption{Storage costs for timestamps and commitments per server block (independent of block size $L$).\label{fig:dbsizes}}
\end{figure}

In Figure~\ref{fig:renew}, we show the computation time required for renewing timestamps and commitments for database size $N\in\{2^8,2^{12},2^{16}\}$ and data block size $L=100\si{\kilo\byte}$.
We observe that the renewal time scales linearly with the number of data objects and depends on the commitment and signature scheme parameters.
Renewing the protection after 100 years takes about \SI{1}{\minute} for $N=2^8$ data objects and \SI{256}{\minute} for $N=2^{16}$ data objects.

\begin{figure}
\centering

\begin{tikzpicture}
\begin{axis}[
    xlabel={Year},
    ylabel={Time in \si{\minute}},
    ymajorgrids=true,
    grid style=dashed,
    width=0.95\columnwidth,
    height=4.5cm,
    legend columns=-1,
    x tick label style={/pgf/number format/1000 sep=},
    legend style={legend pos=north west},
    y filter/.code={\pgfmathparse{#1/1000/60}\pgfmathresult},
    only marks,
    each nth point=10,
]

\addplot
table[x index=0, y index=1]
{data/renew_time.dat};
\addlegendentry{$N=2^8$}

\addplot
table[x index=0, y index=2]
{data/renew_time.dat};
\addlegendentry{$N=2^{12}$}

\addplot
table[x index=0, y index=3]
{data/renew_time.dat};
\addlegendentry{$N=2^{16}$}

\end{axis}
\end{tikzpicture}

\caption{Computation time for renewing timestamps and commitments when storing $N\in\{2^8,2^{12},2^{16}\}$ data objects of size $L=100\si{\kilo\byte}$.\label{fig:renew}}
\end{figure}

In Figure~\ref{fig:accesstime}, we show the computation time required for generating commitments and timestamps during one database access made by the client for database size $N\in\{2^8,2^{12},2^{16}\}$ and block size $L=100\si{\kilo\byte}$.
We observe that the computation cost scales logarithmically with the database size $N$, which is because Path-ORAM requires $\log(N)$ server accesses per client access.
Furthermore, we observe that the computation cost also significantly depends on the commitment and timestamp scheme parameters.
Concretely, the computation cost per access ranges from about \SI{0.3}{\second} in 2018 for $N=2^8$ and about \SI{3.7}{\second} in 2118 for $N=2^{16}$.

\begin{figure}
\centering

\begin{tikzpicture}
\begin{axis}[
    xlabel={Year},
    ylabel={Time in \si{\second}},
    ymajorgrids=true,
    grid style=dashed,
    width=1.0\columnwidth,
    height=4.5cm,
    legend pos=north west,
    legend columns=-1,
    x tick label style={/pgf/number format/1000 sep=},
    y filter/.code={\pgfmathparse{#1/1024}\pgfmathresult},
    only marks,
    each nth point=5,
    ymin=0,
]

\addplot
table[x index=0, y index=1]
{data/access_time.dat};
\addlegendentry{$N=2^{8}$}

\addplot
table[x index=0, y index=2]
{data/access_time.dat};
\addlegendentry{$N=2^{12}$}

\addplot
table[x index=0, y index=3]
{data/access_time.dat};
\addlegendentry{$N=2^{16}$}

\end{axis}
\end{tikzpicture}

\caption{Computation time for generating timestamps and commitments per database access for database size $N\in\{2^8,2^{12},2^{16}\}$ and block size $L=100\si{\kilo\byte}$.}
\label{fig:accesstime}
\end{figure}

In Figure~\ref{fig:accesstraffic} we show the communication cost for the client per database access for \textsf{ORAM-SS}, which is a combination of ORAM and Secret Sharing without long-term integrity protection, and \PROP.
For $N=2^{12}$ and $L=100\si{\kilo\byte}$ we observe that the communication overhead for adding long-term integrity protection ranges between $0.06\%$ in year 2018 and $259\%$ in 2118. The reason is that over time evidence is accumulated which needs to be communicated on every access.
The communication cost per database access scales logarithmically with the database size $N$ and linearly with the number of protection renewals. It also depends on the commitment and timestamp scheme parameters.

\begin{figure}
\centering

\begin{tikzpicture}
\begin{axis}[
    xlabel={Year},
    ylabel={\SI{}{\mega\byte}},
    ymajorgrids=true,
    grid style=dashed,
    width=1.0\columnwidth,
    height=4.5cm,
    legend pos=north west,
    legend columns=-1,
    x tick label style={/pgf/number format/1000 sep=},
    y filter/.code={\pgfmathparse{#1/1024/1024}\pgfmathresult},
    only marks,
    each nth point=5,
    ymin=0,
]


\addplot
table[x index=0, y index=4]
{data/access_traffic.dat};
\addlegendentry{\textsf{ORAM-SS}}

\addplot
table[x index=0, y index=3]
{data/access_traffic.dat};
\addlegendentry{\PROP}


\end{axis}
\end{tikzpicture}

\caption{Communication cost for the client per database access for database size $N = 2^{12}$ and block size $L=100\si{\kilo\byte}$.}
\label{fig:accesstraffic}
\end{figure}

\section{Conclusions}
We presented the first long-term secure storage architecture that combines long-term integrity, long-term confidentiality, and long-term access pattern hiding protection.
Overall, our performance measurements show that these protection goals can be achieved with manageable communication, computation, and storage costs.
The storage and communication cost per block and per storage server is independent of the block size $L$.
The computation and communication costs are logarithmic in the database size $N$ and also depend on the amount of accumulated evidence.
The protection renewal time is linear in $N$.

We envision that the performance of \PROP can further be improved by using Merkle Hash Trees (MHT) \cite{merkle1990certified} in order to reduce the number of timestamps as follows.
During timestamp and commitment renewal, instead of timestamping each data block separately, one could first create a MHT for the entire database and then timestamp only the root of that tree.
While asymptotically this increases the storage and computation complexity from $O(N)$ to $O(N\log(N))$, we expect that concrete computation costs decrease because typical hash functions are much faster to evaluate than the signing algorithms used for timestamping. Furhtermore, hash values are also much smaller in size compared to signatures.
We leave the exact description, implementation, and evaluation of this approach for future work.



\begin{acks}
This work has been co-funded by the \mbox{DFG} as part of projects S5 and S6 within the \mbox{CRC}~1119 \mbox{CROSSING}.
\end{acks}

\section*{Version history}
The following changes have been made since the proceedings version:
\begin{itemize}
	\item November 2018: An error in the description of the commitment renewal procedure in Algorithm~\ref{alg.access} has been fixed.
	The analysis of the communication cost in Section~\ref{sec.evaluation} has been revised and now shows the total communication cost for the client.
	\item April 2019: The number of shareholders was not specified in the evaluation and has been added.
\end{itemize}
    
\bibliographystyle{styles/ACM-Reference-Format}
\bibliography{bib/bibliography}


\begin{thebibliography}{31}


\ifx \showCODEN    \undefined \def \showCODEN     #1{\unskip}     \fi
\ifx \showDOI      \undefined \def \showDOI       #1{#1}\fi
\ifx \showISBNx    \undefined \def \showISBNx     #1{\unskip}     \fi
\ifx \showISBNxiii \undefined \def \showISBNxiii  #1{\unskip}     \fi
\ifx \showISSN     \undefined \def \showISSN      #1{\unskip}     \fi
\ifx \showLCCN     \undefined \def \showLCCN      #1{\unskip}     \fi
\ifx \shownote     \undefined \def \shownote      #1{#1}          \fi
\ifx \showarticletitle \undefined \def \showarticletitle #1{#1}   \fi
\ifx \showURL      \undefined \def \showURL       {\relax}        \fi
\providecommand\bibfield[2]{#2}
\providecommand\bibinfo[2]{#2}
\providecommand\natexlab[1]{#1}
\providecommand\showeprint[2][]{arXiv:#2}

\bibitem[\protect\citeauthoryear{Bayer, Haber, and Stornetta}{Bayer
  et~al\mbox{.}}{1993}]%
        {Bayer1993}
\bibfield{author}{\bibinfo{person}{Dave Bayer}, \bibinfo{person}{Stuart Haber},
  {and} \bibinfo{person}{W.~Scott Stornetta}.} \bibinfo{year}{1993}\natexlab{}.
\newblock \showarticletitle{Improving the Efficiency and Reliability of Digital
  Time-Stamping}. In \bibinfo{booktitle}{\emph{Sequences II: Methods in
  Communication, Security, and Computer Science}},
  \bibfield{editor}{\bibinfo{person}{Renato Capocelli},
  \bibinfo{person}{Alfredo De~Santis}, {and} \bibinfo{person}{Ugo Vaccaro}}
  (Eds.). \bibinfo{publisher}{Springer New York}, \bibinfo{address}{New York,
  NY}, \bibinfo{pages}{329--334}.
\newblock
\showISBNx{978-1-4613-9323-8}


\bibitem[\protect\citeauthoryear{BlueKrypt}{BlueKrypt}{2018}]%
        {keylength.com}
\bibfield{author}{\bibinfo{person}{BlueKrypt}.}
  \bibinfo{year}{2018}\natexlab{}.
\newblock \bibinfo{title}{\url{https://www.keylength.com}}.
\newblock \bibinfo{howpublished}{Website}.   (\bibinfo{year}{2018}).
\newblock


\bibitem[\protect\citeauthoryear{Braun, Buchmann, Demirel, Geihs, Fujiwara,
  Moriai, Sasaki, and Waseda}{Braun et~al\mbox{.}}{2017}]%
        {Braun:2017:LSS:3052973.3053043}
\bibfield{author}{\bibinfo{person}{Johannes Braun}, \bibinfo{person}{Johannes
  Buchmann}, \bibinfo{person}{Denise Demirel}, \bibinfo{person}{Matthias
  Geihs}, \bibinfo{person}{Mikio Fujiwara}, \bibinfo{person}{Shiho Moriai},
  \bibinfo{person}{Masahide Sasaki}, {and} \bibinfo{person}{Atsushi Waseda}.}
  \bibinfo{year}{2017}\natexlab{}.
\newblock \showarticletitle{LINCOS: A Storage System Providing Long-Term
  Integrity, Authenticity, and Confidentiality}. In
  \bibinfo{booktitle}{\emph{Proceedings of the 2017 ACM on Asia Conference on
  Computer and Communications Security}} \emph{(\bibinfo{series}{ASIA CCS
  '17})}. \bibinfo{publisher}{ACM}, \bibinfo{address}{New York, NY, USA},
  \bibinfo{pages}{461--468}.
\newblock
\showISBNx{978-1-4503-4944-4}


\bibitem[\protect\citeauthoryear{Braun, Buchmann, Mullan, and Wiesmaier}{Braun
  et~al\mbox{.}}{2014}]%
        {Braun2014}
\bibfield{author}{\bibinfo{person}{Johannes Braun}, \bibinfo{person}{Johannes
  Buchmann}, \bibinfo{person}{Ciaran Mullan}, {and} \bibinfo{person}{Alex
  Wiesmaier}.} \bibinfo{year}{2014}\natexlab{}.
\newblock \showarticletitle{Long term confidentiality: a survey}.
\newblock \bibinfo{journal}{\emph{Designs, Codes and Cryptography}}
  \bibinfo{volume}{71}, \bibinfo{number}{3} (\bibinfo{year}{2014}),
  \bibinfo{pages}{459--478}.
\newblock
\showISSN{1573-7586}


\bibitem[\protect\citeauthoryear{Buchmann, Dahmen, and H{\"u}lsing}{Buchmann
  et~al\mbox{.}}{2011}]%
        {Buchmann2011}
\bibfield{author}{\bibinfo{person}{Johannes Buchmann}, \bibinfo{person}{Erik
  Dahmen}, {and} \bibinfo{person}{Andreas H{\"u}lsing}.}
  \bibinfo{year}{2011}\natexlab{}.
\newblock \showarticletitle{XMSS - A Practical Forward Secure Signature Scheme
  Based on Minimal Security Assumptions}. In
  \bibinfo{booktitle}{\emph{Post-Quantum Cryptography: 4th International
  Workshop, PQCrypto 2011, Taipei, Taiwan, November 29 -- December 2, 2011.
  Proceedings}}, \bibfield{editor}{\bibinfo{person}{Bo-Yin Yang}} (Ed.).
  \bibinfo{publisher}{Springer Berlin Heidelberg}, \bibinfo{address}{Berlin,
  Heidelberg}, \bibinfo{pages}{117--129}.
\newblock
\showISBNx{978-3-642-25405-5}


\bibitem[\protect\citeauthoryear{Buldas, Geihs, and Buchmann}{Buldas
  et~al\mbox{.}}{2017a}]%
        {Buldas2017}
\bibfield{author}{\bibinfo{person}{Ahto Buldas}, \bibinfo{person}{Matthias
  Geihs}, {and} \bibinfo{person}{Johannes Buchmann}.}
  \bibinfo{year}{2017}\natexlab{a}.
\newblock \showarticletitle{Long-Term Secure Commitments via
  Extractable-Binding Commitments}. In \bibinfo{booktitle}{\emph{Information
  Security and Privacy: 22nd Australasian Conference, ACISP 2017, Auckland, New
  Zealand, July 3--5, 2017, Proceedings, Part I}},
  \bibfield{editor}{\bibinfo{person}{Josef Pieprzyk} {and}
  \bibinfo{person}{Suriadi Suriadi}} (Eds.). \bibinfo{publisher}{Springer
  International Publishing}, \bibinfo{address}{Cham}, \bibinfo{pages}{65--81}.
\newblock
\showISBNx{978-3-319-60055-0}


\bibitem[\protect\citeauthoryear{Buldas, Geihs, and Buchmann}{Buldas
  et~al\mbox{.}}{2017b}]%
        {buldas2017pra}
\bibfield{author}{\bibinfo{person}{Ahto Buldas}, \bibinfo{person}{Matthias
  Geihs}, {and} \bibinfo{person}{Johannes Buchmann}.}
  \bibinfo{year}{2017}\natexlab{b}.
\newblock \showarticletitle{Long-Term Secure Time-Stamping Using Preimage-Aware
  Hash Functions}. In \bibinfo{booktitle}{\emph{Provable Security}},
  \bibfield{editor}{\bibinfo{person}{Tatsuaki Okamoto}, \bibinfo{person}{Yong
  Yu}, \bibinfo{person}{Man~Ho Au}, {and} \bibinfo{person}{Yannan Li}} (Eds.).
  \bibinfo{publisher}{Springer International Publishing},
  \bibinfo{address}{Cham}, \bibinfo{pages}{251--260}.
\newblock
\showISBNx{978-3-319-68637-0}


\bibitem[\protect\citeauthoryear{Buldas and Laanoja}{Buldas and
  Laanoja}{2013}]%
        {DBLP:conf/acisp/BuldasL13}
\bibfield{author}{\bibinfo{person}{Ahto Buldas} {and} \bibinfo{person}{Risto
  Laanoja}.} \bibinfo{year}{2013}\natexlab{}.
\newblock \showarticletitle{Security Proofs for Hash Tree Time-Stamping Using
  Hash Functions with Small Output Size}. In
  \bibinfo{booktitle}{\emph{Information Security and Privacy: 18th Australasian
  Conference, ACISP 2013, Brisbane, Australia, July 1-3, 2013. Proceedings}},
  \bibfield{editor}{\bibinfo{person}{Colin Boyd} {and} \bibinfo{person}{Leonie
  Simpson}} (Eds.). \bibinfo{publisher}{Springer}, \bibinfo{address}{Berlin,
  Heidelberg}, \bibinfo{pages}{235--250}.
\newblock


\bibitem[\protect\citeauthoryear{Buldas and Laur}{Buldas and Laur}{2007}]%
        {DBLP:conf/pkc/BuldasL07}
\bibfield{author}{\bibinfo{person}{Ahto Buldas} {and} \bibinfo{person}{Sven
  Laur}.} \bibinfo{year}{2007}\natexlab{}.
\newblock \showarticletitle{Knowledge-Binding Commitments with Applications in
  Time-Stamping}. In \bibinfo{booktitle}{\emph{Public Key Cryptography - {PKC}
  2007, 10th International Conference on Practice and Theory in Public-Key
  Cryptography, Beijing, China, April 16-20, 2007, Proceedings}}.
  \bibinfo{publisher}{Springer}, \bibinfo{address}{Berlin, Heidelberg},
  \bibinfo{pages}{150--165}.
\newblock


\bibitem[\protect\citeauthoryear{Buldas and Saarepera}{Buldas and
  Saarepera}{2004}]%
        {DBLP:conf/asiacrypt/BuldasS04}
\bibfield{author}{\bibinfo{person}{Ahto Buldas} {and}
  \bibinfo{person}{M{\"{a}}rt Saarepera}.} \bibinfo{year}{2004}\natexlab{}.
\newblock \showarticletitle{On Provably Secure Time-Stamping Schemes}. In
  \bibinfo{booktitle}{\emph{Advances in Cryptology - {ASIACRYPT} 2004, 10th
  International Conference on the Theory and Application of Cryptology and
  Information Security, Jeju Island, Korea, December 5-9, 2004, Proceedings}}.
  \bibinfo{publisher}{Springer}, \bibinfo{address}{Berlin, Heidelberg},
  \bibinfo{pages}{500--514}.
\newblock


\bibitem[\protect\citeauthoryear{Geihs, Demirel, and Buchmann}{Geihs
  et~al\mbox{.}}{2016}]%
        {geihs2016security}
\bibfield{author}{\bibinfo{person}{Matthias Geihs}, \bibinfo{person}{Denise
  Demirel}, {and} \bibinfo{person}{Johannes Buchmann}.}
  \bibinfo{year}{2016}\natexlab{}.
\newblock \showarticletitle{A Security Analysis of Techniques for Long-Term
  Integrity Protection}. In \bibinfo{booktitle}{\emph{2016 14th Annual
  Conference on Privacy, Security and Trust (PST)}}.
  \bibinfo{publisher}{{IEEE}}, \bibinfo{address}{Piscataway, NJ, USA},
  \bibinfo{pages}{449--456}.
\newblock


\bibitem[\protect\citeauthoryear{Gisin, Ribordy, Tittel, and Zbinden}{Gisin
  et~al\mbox{.}}{2002}]%
        {gisin2002quantum}
\bibfield{author}{\bibinfo{person}{Nicolas Gisin}, \bibinfo{person}{Gr\'egoire
  Ribordy}, \bibinfo{person}{Wolfgang Tittel}, {and} \bibinfo{person}{Hugo
  Zbinden}.} \bibinfo{year}{2002}\natexlab{}.
\newblock \showarticletitle{Quantum cryptography}.
\newblock \bibinfo{journal}{\emph{Rev. Mod. Phys.}}  \bibinfo{volume}{74}
  (\bibinfo{date}{Mar} \bibinfo{year}{2002}), \bibinfo{pages}{145--195}.
\newblock
Issue 1.


\bibitem[\protect\citeauthoryear{Goldreich}{Goldreich}{1987}]%
        {Goldreich:1987:TTS:28395.28416}
\bibfield{author}{\bibinfo{person}{O. Goldreich}.}
  \bibinfo{year}{1987}\natexlab{}.
\newblock \showarticletitle{Towards a Theory of Software Protection and
  Simulation by Oblivious RAMs}. In \bibinfo{booktitle}{\emph{Proceedings of
  the Nineteenth Annual ACM Symposium on Theory of Computing}}
  \emph{(\bibinfo{series}{STOC '87})}. \bibinfo{publisher}{ACM},
  \bibinfo{address}{New York, NY, USA}, \bibinfo{pages}{182--194}.
\newblock
\showISBNx{0-89791-221-7}


\bibitem[\protect\citeauthoryear{Goldreich}{Goldreich}{2001}]%
        {goldreich2001foundations}
\bibfield{author}{\bibinfo{person}{Oded Goldreich}.}
  \bibinfo{year}{2001}\natexlab{}.
\newblock \bibinfo{booktitle}{\emph{Foundations of Cryptography: Volume 1,
  Basic Tools}}.
\newblock \bibinfo{publisher}{Cambridge University Press},
  \bibinfo{address}{Cambridge, UK}.
\newblock


\bibitem[\protect\citeauthoryear{Haber and Stornetta}{Haber and
  Stornetta}{1990}]%
        {DBLP:conf/crypto/HaberS90}
\bibfield{author}{\bibinfo{person}{Stuart Haber} {and}
  \bibinfo{person}{W.~Scott Stornetta}.} \bibinfo{year}{1990}\natexlab{}.
\newblock \showarticletitle{How to Time-Stamp a Digital Document}. In
  \bibinfo{booktitle}{\emph{Advances in Cryptology - {CRYPTO} '90, 10th Annual
  International Cryptology Conference, Santa Barbara, California, USA, August
  11-15, 1990, Proceedings}}, \bibfield{editor}{\bibinfo{person}{Alfred~J.
  Menezes} {and} \bibinfo{person}{Scott~A. Vanstone}} (Eds.).
  \bibinfo{publisher}{Springer}, \bibinfo{address}{Berlin, Heidelberg},
  \bibinfo{pages}{437--455}.
\newblock


\bibitem[\protect\citeauthoryear{Halevi and Micali}{Halevi and Micali}{1996}]%
        {Halevi1996}
\bibfield{author}{\bibinfo{person}{Shai Halevi} {and} \bibinfo{person}{Silvio
  Micali}.} \bibinfo{year}{1996}\natexlab{}.
\newblock \showarticletitle{Practical and Provably-Secure Commitment Schemes
  from Collision-Free Hashing}. In \bibinfo{booktitle}{\emph{Advances in
  Cryptology --- CRYPTO '96: 16th Annual International Cryptology Conference
  Santa Barbara, California, USA August 18--22, 1996 Proceedings}},
  \bibfield{editor}{\bibinfo{person}{Neal Koblitz}} (Ed.).
  \bibinfo{publisher}{Springer Berlin Heidelberg}, \bibinfo{address}{Berlin,
  Heidelberg}, \bibinfo{pages}{201--215}.
\newblock
\showISBNx{978-3-540-68697-2}


\bibitem[\protect\citeauthoryear{Herzberg, Jarecki, Krawczyk, and
  Yung}{Herzberg et~al\mbox{.}}{1995}]%
        {Herzberg1995}
\bibfield{author}{\bibinfo{person}{Amir Herzberg},
  \bibinfo{person}{Stanis{\l}aw Jarecki}, \bibinfo{person}{Hugo Krawczyk},
  {and} \bibinfo{person}{Moti Yung}.} \bibinfo{year}{1995}\natexlab{}.
\newblock \showarticletitle{Proactive Secret Sharing Or: How to Cope With
  Perpetual Leakage}. In \bibinfo{booktitle}{\emph{Advances in Cryptology ---
  CRYPTO' 95: 15th Annual International Cryptology Conference Santa Barbara,
  California, USA, August 27--31, 1995 Proceedings}},
  \bibfield{editor}{\bibinfo{person}{Don Coppersmith}} (Ed.).
  \bibinfo{publisher}{Springer Berlin Heidelberg}, \bibinfo{address}{Berlin,
  Heidelberg}, \bibinfo{pages}{339--352}.
\newblock
\showISBNx{978-3-540-44750-4}


\bibitem[\protect\citeauthoryear{Juels and Ristenpart}{Juels and
  Ristenpart}{2014}]%
        {juels2014honey}
\bibfield{author}{\bibinfo{person}{Ari Juels} {and} \bibinfo{person}{Thomas
  Ristenpart}.} \bibinfo{year}{2014}\natexlab{}.
\newblock \showarticletitle{Honey Encryption: Security Beyond the Brute-Force
  Bound}. In \bibinfo{booktitle}{\emph{Advances in Cryptology -- EUROCRYPT
  2014}}, \bibfield{editor}{\bibinfo{person}{Phong~Q. Nguyen} {and}
  \bibinfo{person}{Elisabeth Oswald}} (Eds.). \bibinfo{publisher}{Springer
  Berlin Heidelberg}, \bibinfo{address}{Berlin, Heidelberg},
  \bibinfo{pages}{293--310}.
\newblock
\showISBNx{978-3-642-55220-5}


\bibitem[\protect\citeauthoryear{Lenstra}{Lenstra}{2004}]%
        {lenstra2004key}
\bibfield{author}{\bibinfo{person}{Arjen~K Lenstra}.}
  \bibinfo{year}{2004}\natexlab{}.
\newblock \showarticletitle{Key Lengths}.
\newblock In \bibinfo{booktitle}{\emph{Handbook of Information Security, Volume
  2, Information Warfare, Social, Legal, and International Issues and Security
  Foundations}}. \bibinfo{publisher}{Wiley}, \bibinfo{address}{Indianapolis,
  IN, USA}.
\newblock


\bibitem[\protect\citeauthoryear{Lenstra and Verheul}{Lenstra and
  Verheul}{2001}]%
        {DBLP:journals/joc/LenstraV01}
\bibfield{author}{\bibinfo{person}{Arjen~K. Lenstra} {and}
  \bibinfo{person}{Eric~R. Verheul}.} \bibinfo{year}{2001}\natexlab{}.
\newblock \showarticletitle{Selecting Cryptographic Key Sizes}.
\newblock \bibinfo{journal}{\emph{J. Cryptology}} \bibinfo{volume}{14},
  \bibinfo{number}{4} (\bibinfo{year}{2001}), \bibinfo{pages}{255--293}.
\newblock


\bibitem[\protect\citeauthoryear{Merkle}{Merkle}{1990}]%
        {merkle1990certified}
\bibfield{author}{\bibinfo{person}{Ralph~C. Merkle}.}
  \bibinfo{year}{1990}\natexlab{}.
\newblock \showarticletitle{A Certified Digital Signature}. In
  \bibinfo{booktitle}{\emph{Advances in Cryptology --- CRYPTO' 89
  Proceedings}}, \bibfield{editor}{\bibinfo{person}{Gilles Brassard}} (Ed.).
  \bibinfo{publisher}{Springer New York}, \bibinfo{address}{New York, NY},
  \bibinfo{pages}{218--238}.
\newblock
\showISBNx{978-0-387-34805-6}


\bibitem[\protect\citeauthoryear{{National Institute of Standards and
  Technology}}{{National Institute of Standards and Technology}}{2015}]%
        {sha2}
\bibfield{author}{\bibinfo{person}{{National Institute of Standards and
  Technology}}.} \bibinfo{year}{2015}\natexlab{}.
\newblock \bibinfo{title}{{FIPS PUB 180-4}: Secure Hash Standard ({SHS})}.
\newblock \bibinfo{howpublished}{Federal information processing standards
  publication}.   (\bibinfo{date}{August} \bibinfo{year}{2015}).
\newblock


\bibitem[\protect\citeauthoryear{NIST}{NIST}{2001}]%
        {AES}
\bibfield{author}{\bibinfo{person}{NIST}.} \bibinfo{year}{2001}\natexlab{}.
\newblock \bibinfo{title}{{FIPS 197}: Announcing the advanced encryption
  standard ({AES})}.
\newblock   (\bibinfo{year}{2001}).
\newblock


\bibitem[\protect\citeauthoryear{Rivest, Shamir, and Adleman}{Rivest
  et~al\mbox{.}}{1978}]%
        {Rivest:1978:MOD:359340.359342}
\bibfield{author}{\bibinfo{person}{R.~L. Rivest}, \bibinfo{person}{A. Shamir},
  {and} \bibinfo{person}{L. Adleman}.} \bibinfo{year}{1978}\natexlab{}.
\newblock \showarticletitle{A Method for Obtaining Digital Signatures and
  Public-key Cryptosystems}.
\newblock \bibinfo{journal}{\emph{Commun. ACM}} \bibinfo{volume}{21},
  \bibinfo{number}{2} (\bibinfo{date}{Feb.} \bibinfo{year}{1978}),
  \bibinfo{pages}{120--126}.
\newblock
\showISSN{0001-0782}


\bibitem[\protect\citeauthoryear{Shamir}{Shamir}{1979}]%
        {Shamir:1979:SS:359168.359176}
\bibfield{author}{\bibinfo{person}{Adi Shamir}.}
  \bibinfo{year}{1979}\natexlab{}.
\newblock \showarticletitle{How to Share a Secret}.
\newblock \bibinfo{journal}{\emph{Commun. ACM}} \bibinfo{volume}{22},
  \bibinfo{number}{11} (\bibinfo{date}{Nov.} \bibinfo{year}{1979}),
  \bibinfo{pages}{612--613}.
\newblock
\showISSN{0001-0782}


\bibitem[\protect\citeauthoryear{Shannon}{Shannon}{1949}]%
        {shannon1949communication}
\bibfield{author}{\bibinfo{person}{C.~E. Shannon}.}
  \bibinfo{year}{1949}\natexlab{}.
\newblock \showarticletitle{Communication Theory of Secrecy Systems*}.
\newblock \bibinfo{journal}{\emph{Bell System Technical Journal}}
  \bibinfo{volume}{28}, \bibinfo{number}{4} (\bibinfo{year}{1949}),
  \bibinfo{pages}{656--715}.
\newblock
\showISSN{1538-7305}


\bibitem[\protect\citeauthoryear{Stefanov, Gisin, Guinnard, Guinnard, and
  Zbinden}{Stefanov et~al\mbox{.}}{2000}]%
        {doi:10.1080/09500340008233380}
\bibfield{author}{\bibinfo{person}{Andr{\'e} Stefanov},
  \bibinfo{person}{Nicolas Gisin}, \bibinfo{person}{Olivier Guinnard},
  \bibinfo{person}{Laurent Guinnard}, {and} \bibinfo{person}{Hugo Zbinden}.}
  \bibinfo{year}{2000}\natexlab{}.
\newblock \showarticletitle{Optical quantum random number generator}.
\newblock \bibinfo{journal}{\emph{Journal of Modern Optics}}
  \bibinfo{volume}{47}, \bibinfo{number}{4} (\bibinfo{year}{2000}),
  \bibinfo{pages}{595--598}.
\newblock
\showeprint{http://dx.doi.org/10.1080/09500340008233380}


\bibitem[\protect\citeauthoryear{Stefanov and Shi}{Stefanov and Shi}{2012}]%
        {pathoramfull}
\bibfield{author}{\bibinfo{person}{Emil Stefanov} {and} \bibinfo{person}{Elaine
  Shi}.} \bibinfo{year}{2012}\natexlab{}.
\newblock \bibinfo{title}{Path {O-RAM:} An Extremely Simple Oblivious {RAM}
  Protocol}.
\newblock   (\bibinfo{year}{2012}).
\newblock


\bibitem[\protect\citeauthoryear{Stefanov, van Dijk, Shi, Fletcher, Ren, Yu,
  and Devadas}{Stefanov et~al\mbox{.}}{2013}]%
        {pathoram}
\bibfield{author}{\bibinfo{person}{Emil Stefanov}, \bibinfo{person}{Marten van
  Dijk}, \bibinfo{person}{Elaine Shi}, \bibinfo{person}{Christopher~W.
  Fletcher}, \bibinfo{person}{Ling Ren}, \bibinfo{person}{Xiangyao Yu}, {and}
  \bibinfo{person}{Srinivas Devadas}.} \bibinfo{year}{2013}\natexlab{}.
\newblock \showarticletitle{Path {ORAM:} an extremely simple oblivious {RAM}
  protocol}. In \bibinfo{booktitle}{\emph{2013 {ACM} {SIGSAC} Conference on
  Computer and Communications Security, CCS'13, Berlin, Germany, November 4-8,
  2013}}, \bibfield{editor}{\bibinfo{person}{Ahmad{-}Reza Sadeghi},
  \bibinfo{person}{Virgil~D. Gligor}, {and} \bibinfo{person}{Moti Yung}}
  (Eds.). \bibinfo{publisher}{{ACM}}, \bibinfo{address}{New York, NY, USA},
  \bibinfo{pages}{299--310}.
\newblock
\showISBNx{978-1-4503-2477-9}


\bibitem[\protect\citeauthoryear{{The Legion of the Bouncy Castle}}{{The Legion
  of the Bouncy Castle}}{2018}]%
        {bouncycastle}
\bibfield{author}{\bibinfo{person}{{The Legion of the Bouncy Castle}}.}
  \bibinfo{year}{2018}\natexlab{}.
\newblock \bibinfo{title}{\url{https://www.bouncycastle.org/}}.
\newblock   (\bibinfo{year}{2018}).
\newblock


\bibitem[\protect\citeauthoryear{Vigil, Buchmann, Cabarcas, Weinert, and
  Wiesmaier}{Vigil et~al\mbox{.}}{2015}]%
        {DBLP:journals/compsec/VigilBCWW15}
\bibfield{author}{\bibinfo{person}{Mart{\'{\i}}n A.~Gagliotti Vigil},
  \bibinfo{person}{Johannes~A. Buchmann}, \bibinfo{person}{Daniel Cabarcas},
  \bibinfo{person}{Christian Weinert}, {and} \bibinfo{person}{Alexander
  Wiesmaier}.} \bibinfo{year}{2015}\natexlab{}.
\newblock \showarticletitle{Integrity, authenticity, non-repudiation, and proof
  of existence for long-term archiving: {A} survey}.
\newblock \bibinfo{journal}{\emph{Computers {\&} Security}}
  \bibinfo{volume}{50} (\bibinfo{year}{2015}), \bibinfo{pages}{16--32}.
\newblock


\end{thebibliography}


\end{document}